\documentclass{article}
\usepackage{arxiv}
\usepackage{amsmath,amsfonts}
\usepackage{algorithmic}
\usepackage{algorithm}
\usepackage{array}
\usepackage[caption=false,font=normalsize,labelfont=sf,textfont=sf]{subfig}
\usepackage{textcomp}
\usepackage{stfloats}
\usepackage{subcaption}
\usepackage{url}
\usepackage{multirow}
\usepackage{verbatim}
\usepackage{graphicx}
\usepackage{cite}
\usepackage{amsmath}
\usepackage{amssymb}
\usepackage{amsfonts}
\usepackage{amsthm}
\newtheorem{theorem}{Theorem}
\newtheorem{proposition}[theorem]{Proposition}
\newtheorem{corollary}[theorem]{Corollary}
\usepackage[utf8]{inputenc} 
\usepackage[T1]{fontenc}    
\usepackage{hyperref}       
\usepackage{url}            
\usepackage{booktabs}       
\usepackage{amsfonts}       
\usepackage{nicefrac}       
\usepackage{microtype}      
\usepackage{lipsum}
\usepackage{graphicx}
\graphicspath{ {./images/} }

\title{TADP-RME: A Trust-Adaptive Differential Privacy Framework for Enhancing Reliability of Data-Driven Systems}

\author{
Labani Halder \\
Indian Statistical Institute Kolkata \\
203 B.T. Road, Kolkata-700108, India \\
\texttt{labanihalder1@gmail.com} \\
\And
Payel Sadhukhan \\
Army Institute of Management \\
Plot No III/B-11, Action Area III, New Town, Kolkata-700160, India \\
\texttt{payel0410@gmail.com} \\
\And
Sarbani Palit \\
Indian Statistical Institute Kolkata \\
203 B.T. Road, Kolkata-700108, India \\
\texttt{} \\
}

\begin{document}
\maketitle

\begin{abstract}
Ensuring reliability in adversarial settings necessitates treating privacy as a foundational component of data-driven systems. While differential privacy and cryptographic protocols offer strong guarantees, extant schemes rely on a fixed privacy budget, leading to a rigid utility–privacy trade-off that fails under heterogeneous user trust. Moreover, noise-only DP preserves geometric structure, which inference attacks exploit, causing privacy leakage - a system failure mode. We propose \textbf{TADP-RME} (Trust-Adaptive DP with Reverse Manifold Embedding), a framework that enhances reliability in adversarial conditions with varying levels of user trust. TADP-RME's modus operandi introduces an inverse trust score \(\mathcal{T} \in [0,1]\) to attenuate the privacy budget, enabling smooth, interpretable transitions between high-utility (low-privacy) and low-utility (high-privacy) requirements. It further applies \textit{Reverse Manifold Embedding} (RME), a nonlinear transformation to jumble the local proximity relationships and accentuate inversion ambiguity (despite preserving formal \((\varepsilon,\delta)\)-DP guarantees in post-processing). 
Theoretical analysis and experimental outcomes show that TADP-RME improves the privacy–utility trade-off, reducing attack success rates by up to $3.1\%$ without significant utility loss. It consistently outperforms existing methods against inference attacks, establishing a unified approach to guarantee system reliability under adversarial constraints.
\end{abstract}


\section{Introduction}
\noindent The rapid adoption of data-driven systems in sensitive domains such as healthcare, finance, and personalized services has mandated privacy as a fundamental pillar of systems' reliability \cite{data-driven1,data-driven2}. Differential Privacy (DP) has emerged as a principled framework for protecting individual data and improving the reliability of data-driven systems, offering strong guarantees that the inclusion or exclusion of a single record does not significantly affect the outcome of an analysis \cite{dwork2006, dwork2014book}. However, recent empirical studies have shown that DP alone does not fully eliminate privacy leakage, especially when models preserve structural and statistical properties that can be exploited by inference attacks \cite{jayaraman2020evaluating, shokri2017membership, nasr2019comprehensive, carlini2021extracting, fredrikson2015model}. From a system reliability perspective, privacy leakage can be interpreted as a failure event that compromises the dependability of data-driven systems \cite{privacy-rel1,privacy-rel2}. 
Consequently, designing privacy-preserving mechanisms can be viewed as a reliability engineering problem, in which the objective is to minimize the failure probability under adversarial conditions.

The attacks often stem from vulnerabilities arising from residual geometric footprints in the perturbed data, such as pairwise distances, clustering structure, and neighborhood configuration. These aspects constitute key limitations of conventional DP mechanisms: \textit{while they perturb data values, they do not explicitly disrupt geometric structure, which remains a critical source of privacy leakage}. Additionally, conventional DP mechanisms rely on a fixed privacy budget $\varepsilon$, enforcing a uniform trade-off between privacy and utility across all users and contexts. In practice, however, privileges are rarely uniform. Different users, applications, or operational settings often demand varying levels of privacy protection. Applying a single privacy budget across heterogeneous trust substrates can lead to suboptimal outcomes, either unnecessarily degrading utility or failing to provide sufficient privacy. Recent work has explored adaptive and personalized variants of differential privacy to address this limitation \cite{jorgensen2015pdp, ebadi2016pdp}. While these approaches introduce flexibility in noise calibration, they typically operate within the same noise-injection paradigm and do not modify the underlying layout of the data. As a result, they remain vulnerable to modern inference attacks that exploit geometric and statistical structure, particularly in correlated data settings. In particular, distance-based and representation-based attacks can exploit residual neighborhood structure even after noise-based data obfuscation. 

To address these limitations, we introduce {TADP-RME}, a Trust-Adaptive Differential Privacy framework with Reverse Manifold Embedding, whose objective is to improve reliability under heterogeneous adversarial conditions. The work has two key objectives -- i] adaptive privacy control, and, ii] structural transformation to enhance robustness against inference attacks. Accordingly, our framework consists of two components. First, we introduce an inverse trust metric, $\mathcal{T} \in [0,1]$ that quantifies risk: $\mathcal{T}=0$ signifies a highly trusted, low-risk context and $\mathcal{T}=1$ denotes a completely untrusted, high-risk environment. We use this information to adaptively determine the privacy budget $\varepsilon_{\mathcal{T}}$. This metric enables a smooth, mathematically interpretable privacy–utility trade-off. Second, we propose a nonlinear geometric transformation mechanism, Reverse Manifold Embedding (RME) designed to disrupt local proximity relationships and reduce the effectiveness of geometry-based inference attacks. 
RME maps data into a higher dimensional space using a nonlinear periodic embedding that intentionally distorts neighborhood structure. In contrast to classical manifold learning techniques that preserve local geometry, RME intentionally distorts neighborhood relationships, such that, nearby points in the original space may become distant after transformation — thereby increasing ambiguity in inverse mapping. This design is inspired by nonlinear manifold transformation (e.g, Swiss-roll type function), which reorder proximity relationships and increase ambiguity in inverse mapping and enhance overall reliability of systems. 

In this work, we provide a comprehensive theoretical analysis of the proposed framework. This includes formal privacy guarantees, information-theoretic bounds on information leakage as a system failure mode within a reliability framework, and complexity analysis of inversion under geometric deformation. We further validate the approach through experiments on benchmark datasets and contemporary competing methods. Results show that TADP-RME achieves a favorable privacy–utility trade-off while improving system reliability against adversarial inference. In particular, it outperforms standard differential privacy mechanisms and personalized baselines. 

The main contributions of this work are as follows:
\begin{itemize}

\item 
We propose a trust-adaptive framework in which an inverse trust score, $\mathcal{T} \in [0,1]$, governs the privacy budget, enabling a flexible and interpretable privacy–utility trade-off (transitioning beyond the fixed-budget Differential Privacy).

\item 
We introduce Reverse Manifold Embedding (RME), a nonlinear transformation to disrupt the structural dependencies, thereby reducing susceptibility to geometry-based inference attacks.



\item 
Empirical results show that TADP-RME achieves improved privacy–utility trade-offs and enhanced robustness compared to classical and personalized Differential Privacy baselines.

\end{itemize}

The rest of the paper is organized as follows. Section II reviews the related work, setting the context for the study. Building on this, Section III presents the problem formulation. Section IV then introduces the proposed TADP-RME framework. Section V provides the theoretical analysis, and Section VI describes the experimental setup along with the corresponding results. Finally, Section VII concludes the article.

\section{Related Work}
\noindent Differential Privacy (DP) provides a rigorous and widely adopted framework for protecting sensitive data and improving the reliability of data-driven systems, offering formal guarantees that limit the influence of any individual record on the output of a computation \cite{dwork2006, dwork2014book}. Classical mechanisms, such as the Laplace and Gaussian mechanisms, achieve $(\varepsilon, \delta)$-DP by injecting calibrated noise into query outputs or data representations. Over time, DP has been extended to a wide range of settings, including local differential privacy, distributed learning, and deep learning. More recent formulations, such as Gaussian Differential Privacy \cite{dong2022gdp}, further refine the interpretation and analysis of privacy guarantees. Despite these advances, practical deployments of DP often reveal a gap between theoretical guarantees and empirical privacy leakage, which can be interpreted as system failure events in real-world machine learning systems \cite{jayaraman2019evaluating}. In particular, traditional DP mechanisms rely on a fixed privacy budget $\varepsilon$, enforcing a uniform privacy–utility trade-off across all users and contexts. However, real-world data access is inherently heterogeneous, with varying trust levels and privacy requirements across users and applications. This mismatch can lead to either excessive utility degradation or insufficient privacy protection. To address this limitation, personalized and adaptive variants of differential privacy have been proposed \cite{jorgensen2015pdp, ebadi2016pdp}. Personalized Differential Privacy (PDP) enables user-specific privacy budgets, while adaptive approaches dynamically adjust noise levels based on contextual factors or data characteristics. From a reliability perspective, these approaches do not explicitly model or minimize failure probability under adversarial conditions, limiting their effectiveness in reliability-critical systems. Although these methods improve flexibility, they largely operate within the standard noise-injection paradigm and do not explicitly  modify the underlying data representation. As a result, they remain vulnerable to inference attacks that exploit structural and statistical properties of the data, particularly in high-dimensional settings \cite{shokri2017membership, nasr2019comprehensive}. Importantly, these approaches perturb data values but do not explicitly disrupt geometric relationships such as pairwise distances or neighborhood structure, which remain key signals for many inference attacks. Beyond noise-based mechanisms, transformation-based approaches have been explored to enhance privacy \cite{aggarwal2008privacy, bingham2001random, liu2019random}. Techniques such as random projection, dimensionality reduction, and feature perturbation aim to obscure sensitive information by modifying feature representations while preserving utility. Hashing-based methods, including locality-sensitive hashing (LSH) \cite{indyk1998lsh}, similarly transform data while maintaining approximate similarity. However, many of these approaches either lack formal differential privacy guarantees or incur significant utility loss, which may negatively impact system reliability in practical deployments. Moreover, a large class of such transformations are linear or approximately distance-preserving, and therefore fail to sufficiently disrupt neighborhood relationships that can be exploited by adversaries. Recent work has highlighted the vulnerability of privacy-preserving mechanisms to modern inference attacks, including membership inference \cite{shokri2017membership}, model inversion \cite{fredrikson2015model}, and data extraction attacks \cite{carlini2021extracting}. These attacks exploit statistical patterns, model outputs, and learned representations to recover sensitive information, effectively acting as failure mechanisms in privacy-preserving systems. Notably, many of these attacks rely on geometric consistency and representation similarity rather than exact data values, revealing fundamental limitations of noise-only protection mechanisms. Empirical studies further demonstrate that even differentially private models can leak sensitive information in practical settings, particularly when structural patterns remain partially preserved \cite{jayaraman2019evaluating}. To resolve these risks, recent approaches have explored adversarial training, representation learning, and hybrid privacy mechanisms that aim to remove sensitive information from learned representations. Additionally, training-based privacy mechanisms such as DP-SGD \cite{abadi2016dpsgd} introduce noise during model optimization to provide end-to-end privacy guarantees. However, these methods operate at the training level rather than directly modifying input data representations, and are therefore complementary to data-level privacy mechanisms. Despite these advances, existing methods typically address either stochastic privacy guarantees or structural robustness in isolation, but not both simultaneously. A unified framework that jointly incorporates adaptive privacy control and explicit structural distortion remains an open challenge, particularly from a system reliability perspective where both formal guarantees and empirical robustness must be jointly ensured \cite{cummings2023dp}. In contrast, the proposed TADP-RME framework integrates trust-adaptive privacy control with nonlinear geometric distortion, explicitly targeting both value-based and structure-based leakage. By dynamically adjusting the privacy budget and disrupting geometric relationships through nonlinear embedding, the proposed approach bridges formal differential privacy guarantees with improved empirical robustness against inference attacks and enhanced system reliability under adversarial conditions.

\section{Problem Statement}
\noindent Modern data-driven systems operate under heterogeneous trust requirements, where different users, entities, or applications demand varying levels of privacy protection and system reliability. 

\begin{figure}[ht]
    \centering
    \includegraphics[width=1\linewidth]{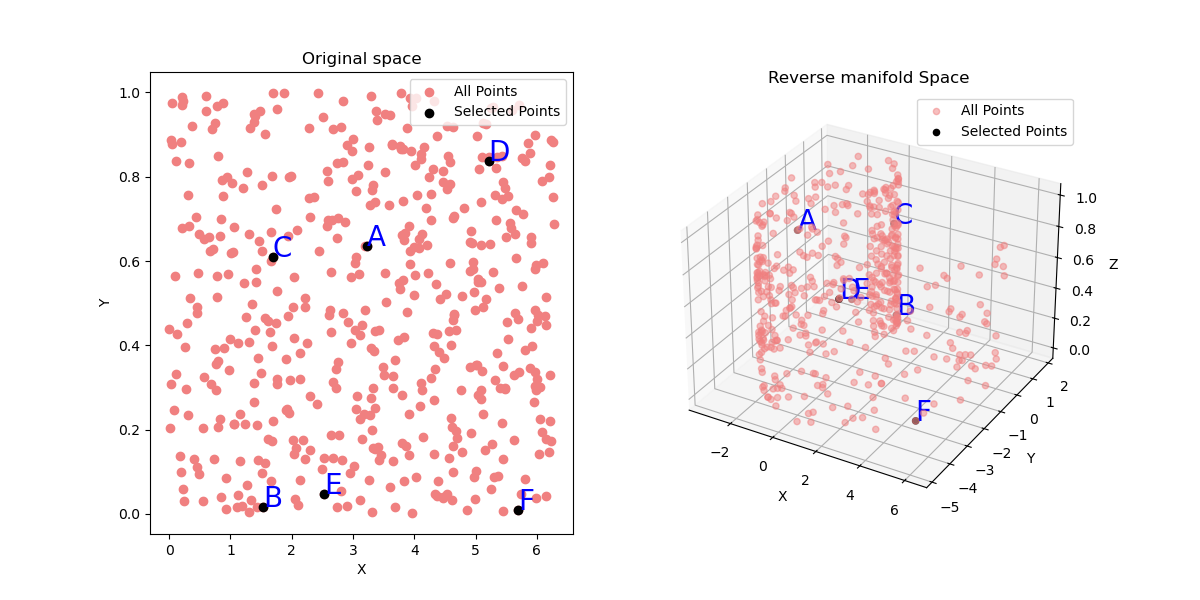}
    \caption{Motivation for Reverse Manifold Embedding (RME). (Left) Original data exhibits tightly clustered points with strong proximity relationships. (Right) After nonlinear twisting into a Swiss-roll–like manifold, nearby samples are separated and redistributed, distorting local structure and introducing ambiguity in neighborhood relationships.}
    \label{fig:swiss}
\end{figure}

Formally, given a dataset $X \in \mathbb{R}^{n \times d}$ and an inverse trust score $\mathcal{T} \in [0,1]$ (where $\mathcal{T}=0$ corresponds to maximum utility for trusted queries and $\mathcal{T}=1$ corresponds to maximum privacy for untrusted queries), the objective is to construct a mechanism $\mathcal{M}_{\mathcal{T}}$ such that
\begin{itemize}
\item $\mathcal{M}_{\mathcal{T}}$ satisfies $(\varepsilon_{\mathcal{T}}, \delta)$-differential privacy
\item the privacy budget $\varepsilon_{\mathcal{T}}$ is adaptively controlled by $\mathcal{T}$
\item the privacy–utility trade-off evolves smoothly with respect to $\mathcal{T}$
\item the mechanism is robust against reconstruction and inference attacks, thereby improving system reliability under adversarial conditions that utilize both statistical and geometric properties of the data.

\end{itemize}

Extant approaches fall short of addressing this problem in a unified manner. Fixed $\varepsilon$ DP mechanisms lack adaptability to heterogeneous trust scenarios, while personalized DP approaches typically require per-user tuning without providing structural protection against inference attacks. Moreover, noise-only mechanisms perturb data values but do not explicitly deform geometric relationships such as distances and neighborhood structure. As a result, latent correlations and proximity patterns can still be manipulated to recover sensitive information.

To this end, we address the following research question.
\\ \emph{How can we design a reliable privacy mechanism that renders an appropriate privacy-utility trade-off based on trust, while preserving formal differential privacy guarantees and diminishing leakage arising from both data values and geometric structure?}

We propose a trust-adaptive differential privacy framework augmented with nonlinear geometric distortion, enabling flexible privacy control, enhanced resistance to inference attacks, and improved system reliability under adversarial conditions. We tackle the problem as designing a mechanism that minimizes adversarial failure probability while preserving utility under heterogeneous trust conditions.

\section{Methodology}
\subsection{Framework Overview}
\begin{algorithm}[t]
\caption{TADP-RME: Trust-Adaptive Differential Privacy with Geometric Distortion}
\label{alg:tadp}
\begin{algorithmic}[1]

\REQUIRE Dataset $X \in \mathbb{R}^{n \times d}$, trust score $\mathcal{T} \in [0,1]$, privacy bounds $\varepsilon_{\min}, \varepsilon_{\max}$, failure probability $\delta$, sensitivity $\Delta_2$, distortion parameter $\alpha$

\ENSURE Protected representation $Z_{\mathcal{T}}$

\STATE \textbf{// Step 1: Trust-Adaptive Privacy Budget}
\STATE Compute $\varepsilon_{\mathcal{T}} = \varepsilon_{\max} - \mathcal{T}(\varepsilon_{\max} - \varepsilon_{\min})$

\STATE \textbf{// Step 2: Noise Calibration}
\STATE Compute variance: $
\sigma_{\mathcal{T}}^2 = \frac{2 \Delta_2^2 \log(1.25/\delta)}{\varepsilon_{\mathcal{T}}^2}
$

\STATE \textbf{// Step 3: Gaussian Perturbation}
\STATE Sample noise $N \sim \mathcal{N}(0, \sigma_{\mathcal{T}}^2 I)$
\STATE Compute $Y_{\mathcal{T}} = X + N$

\STATE \textbf{// Step 4: Geometric Distortion (RME)}
\FOR{each data point $x_i \in Y_{\mathcal{T}}$}
    \STATE Compute
    $z_i = \left(x_i \cos(\alpha x_i), \; x_i \sin(\alpha x_i)\right)$
    
\ENDFOR

\STATE \textbf{// Step 5: Output}
\STATE Return $Z_{\mathcal{T}} = \{z_i\}_{i=1}^{n}$

\end{algorithmic}
\end{algorithm}
\noindent We propose {TADP-RME}, a two-stage framework that integrates trust-adaptive differential privacy with nonlinear geometric transformation. The objective is two fold: i] provide a trust-adaptive privacy budget, and, ii] control information leakage while improving resilience against adversarial inference. Given input data $X \in \mathbb{R}^{n \times d}$ and a trust score $\mathcal{T} \in [0,1]$, the mechanism produces a protected representation
\begin{equation}
Z_{\mathcal{T}} = \varphi(\mathcal{M}_{\text{DP}}(X; \mathcal{T}))
\end{equation}
where $\mathcal{M}_{\text{DP}}$ denotes a trust-adaptive Gaussian mechanism and $\varphi$ represents a nonlinear transformation.  

\subsection{Trust-Adaptive Gaussian Mechanism}
\noindent To enable adaptive control, we define an inverse trust metric $\mathcal{T} \in [0,1]$ that governs protection strength. A value of $\mathcal{T}=0$ corresponds to a trusted setting with zero (minimum) intervention, while $\mathcal{T}=1$ represents an untrusted condition requiring full (maximum) protection. The trust-dependent privacy budget is defined as $\varepsilon_{\mathcal{T}} = \varepsilon_{\max} - \mathcal{T}(\varepsilon_{\max} - \varepsilon_{\min}).$
This formulation provides a continuous transition between utility and protection regimes. The corresponding noise variance is given by $\sigma_{\mathcal{T}}^2 = \frac{2\Delta_2^2 \log(1.25/\delta)}{\varepsilon_{\mathcal{T}}^2}.$
The perturbed representation is computed as
\begin{equation}
Y_{\mathcal{T}} = X + \mathcal{N}(0, \sigma_{\mathcal{T}}^2 I).
\end{equation}
This mechanism adjusts noise intensity according to trust, enabling a controlled trade-off between data utility and resistance to inference.

\subsection{Reverse Manifold Embedding (RME)}
The second stage of this research is motivated to reduce leakage originating from geometric residuals. To this end, we introduce {Reverse Manifold Embedding (RME)}, a nonlinear periodic mapping from $\mathbb{R}^d$ to $\mathbb{R}^{2d}$
\begin{equation}
\varphi(x_i) = \big(x_i \cos(\alpha x_i), \; x_i \sin(\alpha x_i)\big).
\end{equation}
The objective of this mapping is to distort the proximity relationships, causing nearby points in the original space to become separated after transformation. Unlike conventional manifold learning, which maintains local geometry, this approach intentionally alters spatial relationships through nonlinear interactions and dimensional expansion. The transformation introduces -- i] nonlinear feature interactions that break linear dependencies, ii] dimensional expansion that increases representation complexity, and iii] ambiguity in inversion (for a trespasser) due to periodicity and non-injectivity. The parameter $\alpha$ controls distortion strength and can be tuned according to trust conditions. The design enables simultaneous preservation of adaptive formal guarantees and reduction of exploitable structure. The distortion parameter $\alpha$ controls the strength of the transformation and can be fixed or adaptively adjusted based on $\mathcal{T}$. This decoupled design enables the simultaneous achievement of formal guarantees and structural robustness. Figure~(\ref{fig:swiss}) illustrates how geometric transformation can disrupt local structure beyond what noise alone can achieve. This behavior is conceptually inspired by nonlinear manifold distortions like the Swiss-roll, which alter geometric relationships without preserving local neighborhoods. The proposed framework jointly combines stochastic perturbation and geometric distortion to address complementary sources of privacy leakage arising from both data values and structural relationships.
\subsection{Reliability Interpretation of TADP-RME}

In this part, we analyze the proposed framework from a reliability engineering perspective. Privacy leakage is modeled as a failure event, where successful inference attacks indicate a compromise of system confidentiality. The trust score $\mathcal{T}$ acts as a risk exposure parameter that determines the level of protection required under different operating conditions. Lower values correspond to controlled environments, while higher values indicate increased exposure to adversarial threats. We define a reliability function as $R = 1 - P_{\text{attack}}$, where $P_{\text{attack}}$ denotes the probability of successful adversarial inference. In practice, this corresponds directly to the empirical privacy score used in the evaluation, ensuring consistency between theoretical interpretation and experimental measurement. This formulation follows classical reliability theory, where reliability represents the probability of operation without failure. In this context, adversarial success corresponds to failure, and $R$ quantifies the system’s ability to resist such outcomes. Higher values indicate stronger protection against inference-based threats. The proposed framework improves reliability through two complementary mechanisms
\begin{itemize}
\item {\textit{Adaptive noise injection reduces the likelihood of successful inference.}}
\item {\textit{Geometric transformation increases ambiguity in reconstruction.}}
\end{itemize}
Together, these components reduce failure probability while preserving functional utility. This formulation establishes a direct link between adversarial risk and reliability, enabling quantitative evaluation of system performance under varying trust conditions.

\section{Theoretical Analysis of TADP-RME Framework}
\noindent In this section, we analyze the theoretical foundations of the proposed framework from three complementary perspectives. First, we prove that the trust-adaptive noise calibration strictly satisfies formal $(\varepsilon, \delta)$-differential privacy, and we quantify its impact on statistical distinguishability \cite{cover2006elements}. Second, we analyze the computational complexity of inverting the Reverse Manifold Embedding (RME), showing that its nonlinear dimensional expansion leads to combinatorial growth in the inversion search space under naive pairing assumptions. Finally, we provide information-theoretic bounds to quantify how the framework limits data leakage. Together, these analyses demonstrate that the proposed method separates formal guarantees from structural protection.

\subsection{Formal Privacy Guarantees with Trust Adaptation}
\subsubsection{Differential Privacy Guarantees}
\noindent To ensure bounded global sensitivity prior to noise injection, we assume the input records $x_i \in X$ are projected onto an $L_2$ ball of radius $C$, such that $\|x_i\|_2 \leq C$. Consequently, the $L_2$ sensitivity is bounded by $\Delta_2 \leq C$.

\begin{theorem}
Let $\mathcal{M}_{\mathcal{T}}$ be the TADP-RME mechanism with trust score $\mathcal{T} \in [0,1]$. For any adjacent datasets $D, D' \in \mathbb{R}^{n \times d}$ differing in at most one record, and for any measurable subset $S \subseteq \mathbb{R}^{n \times 2d}$, we have
\begin{equation}
\mathbb{P}[\mathcal{M}_{\mathcal{T}}(D) \in S] 
\leq 
e^{\varepsilon_{\mathcal{T}}} \mathbb{P}[\mathcal{M}_{\mathcal{T}}(D') \in S] + \delta
\end{equation}

where the trust-adaptive privacy budget is defined as $
\varepsilon_{\mathcal{T}} = \varepsilon_{\max} - \mathcal{T}(\varepsilon_{\max} - \varepsilon_{\min})$ and the corresponding Gaussian noise variance satisfies $\sigma_{\mathcal{T}}^2 = \frac{2\Delta_2^2 \log(1.25/\delta)}{\varepsilon_{\mathcal{T}}^2}
$
\end{theorem}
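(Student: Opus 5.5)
The plan is to split $\mathcal{M}_{\mathcal{T}}$ into its two stages, $\mathcal{M}_{\mathcal{T}} = \varphi \circ \mathcal{M}_{\text{DP}}(\cdot;\mathcal{T})$, where $\mathcal{M}_{\text{DP}}(D;\mathcal{T}) = D + N$ with $N \sim \mathcal{N}(0,\sigma_{\mathcal{T}}^2 I_{nd})$. I would first show that the Gaussian stage is $(\varepsilon_{\mathcal{T}},\delta)$-DP. I would then show that applying the fixed, data-independent map $\varphi$ row by row cannot weaken this guarantee. The trust score enters only through the scalar $\varepsilon_{\mathcal{T}}$. So the first step is to observe that, for every fixed $\mathcal{T}\in[0,1]$, $\varepsilon_{\mathcal{T}}$ is a convex combination of $\varepsilon_{\min}$ and $\varepsilon_{\max}$, and hence $\varepsilon_{\mathcal{T}} \in [\varepsilon_{\min},\varepsilon_{\max}]$. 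With $\varepsilon_{\min}>0$ this makes $\sigma_{\mathcal{T}}^2$ finite and well defined. From then on $\mathcal{T}$ is a fixed parameter, chosen independently of the data, and the argument is the standard one at privacy level $\varepsilon_{\mathcal{T}}$.

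For the Gaussian stage, I would view the release $D\mapsto D+N$ as a Gaussian mechanism applied to the identity query $f(D)=D\in\mathbb{R}^{nd}$. If $D,D'$ differ in one record $i$, then $f(D)-f(D')$ is nonzero only in row $i$. By the clipping assumption $\|x_i\|_2\le C$, its norm is bounded by the sensitivity $\Delta_2$ used in the calibration (the paper takes $\Delta_2\le C$). I would then invoke the classical Gaussian mechanism theorem (Dwork and Roth, Thm.~A.1): for $\varepsilon\in(0,1)$ and $\sigma \ge \Delta_2\sqrt{2\log(1.25/\delta)}/\varepsilon$, the mechanism is $(\varepsilon,\delta)$-DP. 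Our $\sigma_{\mathcal{T}}$ meets this with equality. The proof behind it is the privacy-loss computation: one projects onto the direction $v=f(D)-f(D')$, and the log-ratio of densities is $\bigl(\|v\|^2 + 2\langle N, v\rangle\bigr)/(2\sigma^2)$, a one-dimensional Gaussian. One then bounds $\mathbb{P}[\text{loss}>\varepsilon_{\mathcal{T}}]\le\delta$ with a Gaussian tail bound. Finally, the standard splitting $S = (S\cap B)\cup(S\cap B^c)$ on the good event $B$ gives $\mathbb{P}[Y_{\mathcal{T}}(D)\in S]\le e^{\varepsilon_{\mathcal{T}}}\mathbb{P}[Y_{\mathcal{T}}(D')\in S]+\delta$.

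For the second stage, I would note that $\varphi$ acts coordinatewise by $t\mapsto (t\cos\alpha t,\ t\sin\alpha t)$. With $\alpha$ fixed, or chosen as a function of $\mathcal{T}$ only and never of the data, this map is continuous and hence Borel measurable from $\mathbb{R}^{n\times d}$ to $\mathbb{R}^{n\times 2d}$. For measurable $S\subseteq\mathbb{R}^{n\times 2d}$ the preimage $\varphi^{-1}(S)$ is measurable, and $\mathbb{P}[\mathcal{M}_{\mathcal{T}}(D)\in S]=\mathbb{P}[Y_{\mathcal{T}}(D)\in\varphi^{-1}(S)]$. Applying the previous inequality to the set $\varphi^{-1}(S)$ yields exactly the claimed bound. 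This is the post-processing theorem, and it needs neither injectivity nor any property of $\varphi$ beyond measurability and data-independence.

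I expect the main obstacle to be the hypotheses of the classical Gaussian calibration rather than the structure of the argument. First, the $\log(1.25/\delta)$ constant is only valid for $\varepsilon<1$. I would therefore either impose $\varepsilon_{\max}<1$, so that $\varepsilon_{\mathcal{T}}<1$ for every $\mathcal{T}$, or, if larger budgets are intended, replace this step with the analytic Gaussian mechanism of Balle and Wang and note that it changes $\sigma_{\mathcal{T}}$. Second, under replace-one adjacency the per-record difference is bounded by $2C$, not $C$. So I would state the adjacency notion explicitly: add/remove gives $\Delta_2\le C$, while replacement requires taking $\Delta_2=2C$. I would then check that the $\Delta_2$ fed into $\sigma_{\mathcal{T}}$ is the correct one. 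With these two conditions made explicit, the remaining steps are routine.
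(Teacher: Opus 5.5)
Your proposal is correct and follows the same route as the paper: the Gaussian stage is $(\varepsilon_{\mathcal{T}},\delta)$-DP by the Dwork--Roth calibration, and the deterministic, data-independent map $\varphi$ preserves this guarantee by post-processing. Your two caveats are genuine and are missing from the paper's proof, which applies the calibration without the $\varepsilon_{\mathcal{T}}<1$ hypothesis (this fails at the paper's budgets with $\delta=10^{-5}$: by my computation, at $\varepsilon=15$ the smallest valid $\delta$ for this $\sigma_{\mathcal{T}}$ is about $2\times 10^{-4}$, and at $\varepsilon=80$ it is essentially $1$) and asserts $\Delta_2\le C$ even though $D,D'\in\mathbb{R}^{n\times d}$ means replace-one adjacency, where the correct bound is $2C$.
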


\begin{proof}
The result follows from the compositional structure of the mechanism. The trust-adaptive mechanism $\mathcal{M}_{\text{TCDP}}(x) = x + \mathcal{N}(0, \sigma_{\mathcal{T}}^2 I_d)$ satisfies $(\varepsilon_{\mathcal{T}}, \delta)$-differential privacy when $\sigma_{\mathcal{T}}$ is calibrated according to the Gaussian mechanism \cite{dwork2014book}. The reverse manifold embedding $\varphi: \mathbb{R}^d \rightarrow \mathbb{R}^{2d}$ is a deterministic mapping. By the post-processing property of differential privacy, applying $\varphi$ to a differentially private output does not weaken the guarantee. Therefore, the full mechanism $\mathcal{M}_{\mathcal{T}} = \varphi \circ \mathcal{M}_{\text{TCDP}}$ satisfies $(\varepsilon_{\mathcal{T}}, \delta)$-differential privacy.
\end{proof}

\subsubsection{Statistical Distinguishability Analysis}
\begin{figure}[ht]
    \includegraphics[width=1\linewidth]{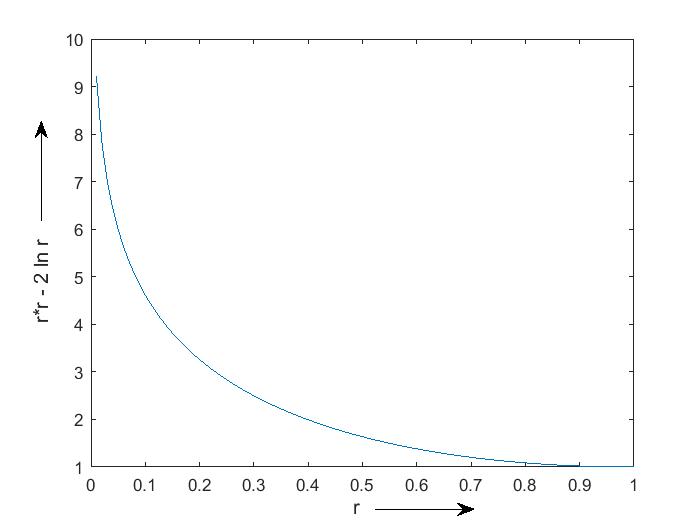}
      \caption{Behavior of $r^2 - 2\ln(r)$ for $r < 1$. The function remains strictly positive across this range, confirming inequality (\ref{eqno:r}) and guaranteeing a positive KL divergence between the corresponding distributions.}
\end{figure}
\noindent Beyond formal guarantees, we analyze how trust-adaptive noise influences distinguishability of outputs across varying trust tiers. Let $\mathcal{P}_H$ and $\mathcal{P}_L$ denote the output distributions corresponding to high-trust ($\mathcal{T}=0$, minimum noise) and low-trust ($\mathcal{T}=1$, maximum noise) entities after the TADP-RME transformation. The Kullback–Leibler (KL) divergence between these distributions satisfies
\begin{equation}
D_{\text{KL}}(\mathcal{P}_L \| \mathcal{P}_H) 
\geq 
\frac{d}{2}\left(r^2 - 2\ln r - 1\right)
\label{eqno:r}
\end{equation}
where $r = \sigma_{\min}/\sigma_{\max} < 1$, with $\sigma_{\min} = \sigma_{\mathcal{T}=0}$ and $\sigma_{\max} = \sigma_{\mathcal{T}=1}$. This result characterizes statistical separation \cite{cover2006elements} between outputs corresponding to different trust levels. The bound follows from the divergence between Gaussian distributions with different variances \cite{cover2006elements}, and reflects how varying trust levels produce distinguishable output distributions. It does not directly imply privacy leakage, as differential privacy bounds worst-case adversarial inference. From a reliability standpoint, the KL divergence characterizes separation between system responses under different trust levels. Larger divergence implies clearer separation between operational regimes, which can be interpreted as controlled behavior under varying risk conditions rather than unintended exposure.

\begin{corollary}
As $r \to 0$, we have $
\lim_{r \to 0} D_{\text{KL}}(\mathcal{P}_L \| \mathcal{P}_H) \to \infty$
indicating that outputs corresponding to significantly different trust levels become increasingly distinguishable, while similar trust levels produce comparable representations. This property enables controlled utility differentiation within the proposed framework.
\end{corollary}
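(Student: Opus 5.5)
The corollary follows from the lower bound (\ref{eqno:r}): the right-hand side diverges as $r \to 0$, and a quantity bounded below by something diverging to $+\infty$ must itself diverge. So the work is showing that $g(r) = r^2 - 2\ln r - 1$ tends to $+\infty$ as $r \to 0^+$, and then passing the limit through the inequality.

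\textbf{Step 1 (behaviour of $g$).} Take $r \in (0,1)$. As $r \to 0^+$ we have $r^2 \to 0$ and $-2\ln r \to +\infty$, so $g(r) \to +\infty$. Explicitly, for any $M>0$, choosing $r < e^{-(M+1)/2}$ gives $g(r) > -2\ln r - 1 > M$. For later use, also note $g'(r) = 2r - 2/r < 0$ on $(0,1)$ and $g(1)=0$. Hence $g$ is strictly decreasing and positive on $(0,1)$, consistent with the figure.

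\textbf{Step 2 (transfer to the divergence).} Since $d \geq 1$ is fixed, (\ref{eqno:r}) gives $D_{\text{KL}}(\mathcal{P}_L\|\mathcal{P}_H) \geq \tfrac{d}{2} g(r)$. Comparing limits, $\liminf_{r\to 0} D_{\text{KL}} \geq \lim_{r \to 0} \tfrac{d}{2}g(r) = +\infty$, so $D_{\text{KL}} \to \infty$. Here $r \to 0$ means the ratio $\varepsilon_{\min}/\varepsilon_{\max} \to 0$, because $\sigma_{\mathcal{T}} \propto 1/\varepsilon_{\mathcal{T}}$ gives $r = \sigma_{\min}/\sigma_{\max} = \varepsilon_{\min}/\varepsilon_{\max}$. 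The family of divergences is therefore indexed by this budget ratio, and the limit is understood along it.

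\textbf{Step 3 (the qualitative remark).} The second half of the statement says that similar trust levels give comparable representations. For this I would note that $g(r) \approx (1-r)^2 \cdot 2$ near $r=1$, so the divergence lower bound vanishes as the two noise levels coincide. The matching upper statement (near-zero divergence) follows from the exact Gaussian KL formula before the post-processing $\varphi$ together with the data-processing inequality, since $\varphi$ can only decrease KL.

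\textbf{Main obstacle.} The delicate point is justifying (\ref{eqno:r}) itself for the distributions \emph{after} $\varphi$. The data-processing inequality yields only an upper bound for post-processed distributions, not a lower bound. I would take (\ref{eqno:r}) as the stated earlier result, as permitted. If it had to be checked, I would first try injectivity of $\varphi$ on a full-measure set: the map $x \mapsto (x\cos\alpha x, x\sin\alpha x)$ recovers $|x|$ as the norm and the sign from the angle, coordinatewise. That would preserve KL exactly, reducing to the Gaussian computation of \cite{cover2006elements}.
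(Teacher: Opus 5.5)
Your Steps 1--2 are exactly the argument the paper intends: it gives no separate proof, and the corollary is read off from the divergence of $-2\ln r$ on the right-hand side of (\ref{eqno:r}); your identification $r=\varepsilon_{\min}/\varepsilon_{\max}$ is also correct. Your extra remarks are sound and go beyond the paper: the data-processing inequality indeed gives only an upper bound after $\varphi$, but $\varphi$ is injective off the null set where some $\cos(\alpha x_j)=0$, so the post-$\varphi$ divergence equals the Gaussian value $\frac{d}{2}\bigl(r^{-2}-1+2\ln r\bigr)$, which dominates the right-hand side of (\ref{eqno:r}) (that right-hand side is in fact the reverse divergence $D_{\text{KL}}(\mathcal{P}_H\|\mathcal{P}_L)$).
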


\subsection{Computational Security Analysis}

\subsubsection{Combinatorial Complexity of RME Inversion}

\begin{theorem}
Let $\mathcal{A}$ be any algorithm attempting to invert the RME transformation $\varphi^{-1}: \mathbb{R}^{2d} \to \mathbb{R}^d$ without knowledge of the correct coordinate pairing. Then the size of the search space for inversion grows at least as
\begin{equation}
T_{\min}(d) \geq \frac{(2d)!}{2^d d!} \cdot R^d
\end{equation}
where $R$ denotes the number of feasible solutions per coordinate pair induced by the nonlinear transformation.
\end{theorem}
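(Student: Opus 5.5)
The argument is a counting argument: we lower-bound the number of candidate preimages an algorithm $\mathcal{A}$ must, in the worst case, distinguish among. We work in the threat model stated in the theorem. The adversary observes an output vector $z \in \mathbb{R}^{2d}$ whose $2d$ coordinates have been presented without the labelling that says which two coordinates came from the same original coordinate $x_j$, namely the pair $(x_j\cos(\alpha x_j),\, x_j\sin(\alpha x_j))$. Under this ``naive pairing'' assumption, inverting $\varphi$ splits into two independent sub-tasks:
\begin{itemize}
\item recovering the partition of the $2d$ output coordinates into $d$ pairs;
\item for each pair, recovering a preimage $x_j$ among the values consistent with it.
\end{itemize}
The bound $T_{\min}(d)$ is the product of the number of choices in each sub-task. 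Since $\mathcal{A}$ has no side information, every choice is a priori admissible.

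\textbf{Step 1: counting pairings.} We count the perfect matchings of a $2d$-element set. Order the $2d$ coordinates in one of $(2d)!$ ways and group consecutive entries into pairs. Each matching is then produced $2^d d!$ times: $2^d$ from swapping the two elements inside each pair, and $d!$ from permuting the pairs. This gives exactly $(2d)!/(2^d d!) = (2d-1)!!$ distinct pairings.

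If one also insists on knowing which element of a pair is the cosine component, the count only grows by a factor $2^d$. So using $(2d)!/(2^d d!)$ is conservative, which is consistent with a lower bound.

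\textbf{Step 2: per-pair ambiguity.} Fix a pair $(u,v)$ and count the $x$ with $x\cos(\alpha x)=u$ and $x\sin(\alpha x)=v$. These equations give $|x| = \sqrt{u^2+v^2}$ and an angle condition on $\alpha x$ modulo $2\pi$. When the pair is mis-assigned, or the adversary only constrains the radius/phase up to periodicity, this set has a definite size. We take $R$ to be, by definition, the number of feasible solutions per pair, as the statement does. For a correctly formed pair the exact solution set may be small. The lower bound is therefore phrased in terms of $R$, taken as the minimum number of candidates the adversary cannot rule out per pair.

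\textbf{Step 3: independence across pairs.} Different pairs involve disjoint coordinates, and $\varphi$ acts coordinatewise. Hence the candidate sets for different pairs are independent, and there are $R^d$ candidate preimages per pairing.

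\textbf{Step 4: combining.} We argue that distinct (pairing, per-pair solution) combinations yield distinct candidate preimages, or at least that the search space indexed by them has the product size. This gives $T_{\min}(d) \ge \frac{(2d)!}{2^d d!} R^d$. Applying Stirling's formula shows the first factor grows like $(2d/e)^d/\sqrt{2}$, i.e.\ superexponentially.

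\textbf{Main obstacle: making the bound meaningful.} The difficulty is that an algorithm could, in principle, prune pairings rather than enumerate them. For example, $\varphi$ gives the consistency check $u^2+v^2 = x^2$ with the phase matching $\alpha\sqrt{u^2+v^2}$ mod $2\pi$, and this might rule out most mis-pairings. For this reason I would state the result strictly as a bound on the \emph{size of the search space} under the naive-pairing assumption. This matches the wording $T_{\min}(d)\ge\cdots$ ``size of the search space,'' and avoids claiming a lower bound on the running time of every algorithm. The proof then reduces to the two counting facts above.
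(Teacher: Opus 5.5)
Your proposal follows the paper's proof: both split inversion into choosing one of the $\frac{(2d)!}{2^d d!}$ perfect matchings of the $2d$ output coordinates and then one of $R$ trigonometric solutions per pair, and multiply the two by independence to get $\frac{(2d)!}{2^d d!}R^d$ (you also derive the matching count, which the paper only asserts). Your restriction of the claim to the size of a naive enumeration space is well founded and matches the paper's own hedge about ``any exhaustive inversion strategy'': $\varphi(x)=\varphi(y)$ forces $y=\pm x$, and $y=-x$ is possible only when $\cos(\alpha x)=0$, so a correctly formed pair generically has a unique preimage and mis-pairings generically fail the consistency check; hence $O(d^2)$ pairwise checks recover the matching, and the bound says nothing about the running time of a non-exhaustive algorithm.
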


\begin{proof}
The inversion process can be decomposed into two independent sources of combinatorial complexity. The RME transformation maps each input coordinate into two output components but does not preserve explicit pairing information. Recovering the original structure therefore requires enumerating all possible pairings of $2d$ coordinates into $d$ unordered pairs. The number of such pairings is $\frac{(2d)!}{2^d d!}$. For each candidate pair $(a_k, b_k)$, inversion requires solving a nonlinear trigonometric system. Due to periodicity, each pair admits multiple feasible solutions, bounded by $R$. Since these two sources are independent, the total search space scales as $\frac{(2d)!}{2^d d!} \cdot R^d$. Thus, any exhaustive inversion strategy must explore a search space of this order.
\end{proof}

\subsubsection{Resilience Against Partial Knowledge Attacks}
\begin{figure}[ht]
    \centering
    \includegraphics[width=1\linewidth]{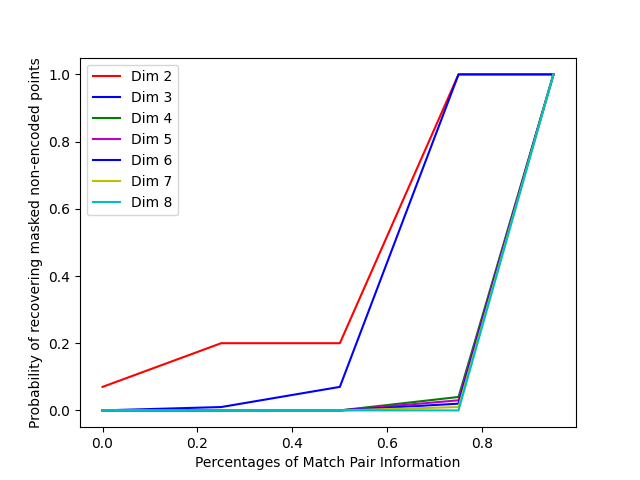}
    \caption{Recovery probability under partial pairing knowledge. While reconstruction remains feasible in low dimensions ($d=2,3$), success probability drops sharply as dimensionality increases, even with high levels of known pairings.}
    \label{fig:probgraph}
\end{figure}
\noindent We next consider an adversary with partial structural knowledge. Suppose an attacker correctly identifies $m$ coordinate pairings, leaving $l = d - m$ unknown pairs. The remaining search space is then $|\mathcal{S}_l| = \frac{(2l)!}{2^l l!} \cdot R^l$. This expression grows rapidly with $l$ due to combinatorial expansion effects. Therefore, unless a substantial fraction of pairings is known, the inversion problem remains computationally challenging. The RME transformation introduces a combinatorial barrier that is robust to partial information leakage, substantially increasing reconstruction difficulty. As shown in Fig.~\ref{fig:probgraph}, reconstruction remains achievable in low-dimensional settings ($d=2,3$), even with substantial prior knowledge (e.g., $75\%$ correct pairings). However, as the dimensionality increases, the probability of successful recovery declines rapidly. This behavior reflects the growth of the inversion space in RME, indicating that higher-dimensional embeddings significantly strengthen resistance against reconstruction attacks, even under partial knowledge. This analysis assumes an adversary without additional side information beyond partial pairing knowledge. More informed adversarial models may reduce the effective search space. It is important to note that while this dimensional expansion ($d \to 2d$) exponentially increases the combinatorial search space for an adversary attempting exact coordinate reconstruction, it does not destroy the utility for downstream machine learning tasks. Because the RME transformation applies deterministic, continuous trigonometric mappings, it empirically preserves class separability for downstream learning tasks. Consequently, lightweight downstream models (such as logistic regression) can still efficiently converge and achieve high classification accuracy without requiring an exponential increase in training data.
\subsection{Information-Theoretic Security}
\noindent We analyze the mechanism from an information-theoretic perspective to quantify how trust-adaptive noise reduces information leakage.
\subsubsection{Mutual Information Bounds}
\begin{theorem}
Let $X \in \mathbb{R}^d$ denote the original data and $Z_{\mathcal{T}} = \varphi(Y_{\mathcal{T}})$ denote the output of the TADP-RME mechanism, where $Y_{\mathcal{T}} = X + \mathcal{N}(0, \sigma_{\mathcal{T}}^2 I)$. Then the mutual information satisfies

\begin{equation}
I(X; Z_{\mathcal{T}}) \leq \frac{d}{2}\log\left(1 + \frac{\mathbb{E}[\|X\|^2]}{d\sigma_{\mathcal{T}}^2}\right)
\end{equation}
\end{theorem}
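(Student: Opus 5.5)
The argument has two steps: a data-processing inequality to remove the RME map, then the Gaussian channel capacity bound for the noisy Gaussian mechanism.

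\emph{Step 1: reduce to the Gaussian stage.} We have the Markov chain $X \to Y_{\mathcal{T}} \to Z_{\mathcal{T}}=\varphi(Y_{\mathcal{T}})$, because $\varphi$ is a deterministic (measurable, continuous) function applied after the noise. The data-processing inequality then gives $I(X;Z_{\mathcal{T}}) \le I(X;Y_{\mathcal{T}})$. This is the same post-processing logic used in the proof of the differential privacy theorem. Non-injectivity of $\varphi$ can only lose information, so this step needs no property of $\varphi$ beyond measurability.

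\emph{Step 2: bound the additive Gaussian channel.} Write $N\sim\mathcal{N}(0,\sigma_{\mathcal{T}}^2 I_d)$, independent of $X$, so $Y_{\mathcal{T}}=X+N$. Using differential entropies (assuming $h(Y_{\mathcal{T}})$ is finite, which holds since $Y_{\mathcal{T}}$ has a smooth density),
\begin{equation}
I(X;Y_{\mathcal{T}}) = h(Y_{\mathcal{T}}) - h(Y_{\mathcal{T}}\mid X) = h(Y_{\mathcal{T}}) - h(N) = h(Y_{\mathcal{T}}) - \frac{d}{2}\log\!\left(2\pi e\,\sigma_{\mathcal{T}}^2\right).
\end{equation}
Since $X$ and $N$ are independent and $N$ has zero mean, the second moment of $Y_{\mathcal{T}}$ is $\mathbb{E}\|Y_{\mathcal{T}}\|^2=\mathbb{E}\|X\|^2+d\sigma_{\mathcal{T}}^2$. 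The maximum-entropy property says that among all $d$-dimensional vectors with a given second moment, the isotropic Gaussian with per-coordinate variance $P=\mathbb{E}\|Y_{\mathcal{T}}\|^2/d$ has the largest entropy. Hence
\[
h(Y_{\mathcal{T}}) \le \frac{d}{2}\log\!\left(2\pi e\left(\sigma_{\mathcal{T}}^2+\frac{\mathbb{E}\|X\|^2}{d}\right)\right).
\]
To justify reducing to an isotropic Gaussian, start from the bound $h(Y)\le \frac12\log\big((2\pi e)^d\det K\big)$, where $K$ is the covariance (a second-moment matrix may replace it, since subtracting the mean only decreases it). Then apply AM–GM to the eigenvalues: $\det K \le (\operatorname{tr}K/d)^d$.

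Subtracting $h(N)$ yields
\[
I(X;Y_{\mathcal{T}}) \le \frac{d}{2}\log\!\left(1+\frac{\mathbb{E}\|X\|^2}{d\sigma_{\mathcal{T}}^2}\right),
\]
and combining with Step 1 gives the claim. By the $L_2$ clipping assumption one may also substitute $\mathbb{E}\|X\|^2\le C^2$. The resulting bound is monotone decreasing in $\mathcal{T}$, because $\sigma_{\mathcal{T}}^2$ increases as $\varepsilon_{\mathcal{T}}$ decreases.

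\emph{Main obstacle.} The only delicate point is the maximum-entropy step in dimension $d$ with a trace, rather than full covariance, constraint. The determinant-to-trace reduction via AM–GM is the step I would write out carefully, together with stating the assumption $\mathbb{E}\|X\|^2<\infty$ so that all entropies are well defined.
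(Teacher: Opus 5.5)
Your proposal is correct and follows the paper's proof: you bound the Gaussian channel by $I(X;Y_{\mathcal{T}})\le \frac{d}{2}\log\bigl(1+\mathbb{E}[\|X\|^2]/(d\sigma_{\mathcal{T}}^2)\bigr)$ and then apply the data processing inequality to the deterministic map $\varphi$. The only difference is that you derive the channel bound yourself, via maximum entropy plus AM--GM on the covariance eigenvalues, whereas the paper cites it as a standard result from Cover and Thomas.
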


\begin{proof}
The mechanism can be decomposed into a Gaussian channel followed by a deterministic transformation. For the Gaussian channel $Y_{\mathcal{T}} = X + N$, standard results yield \cite{cover2006elements} $I(X; Y_{\mathcal{T}}) \leq \frac{d}{2}\log\left(1 + \frac{\mathbb{E}[\|X\|^2]}{d\sigma_{\mathcal{T}}^2}\right)$. Since $Z_{\mathcal{T}} = \varphi(Y_{\mathcal{T}})$ is deterministic, the data processing inequality implies \cite{cover2006elements} $I(X; Z_{\mathcal{T}}) \leq I(X; Y_{\mathcal{T}})$.
\end{proof}

\begin{corollary}As $\sigma_{\mathcal{T}}^2$ increases with $\mathcal{T}$, the mutual information decreases, indicating that $I(X; Z_{\mathcal{T}})$ decreases as $\sigma_{\mathcal{T}}^2$ increases, corresponding to reduced information leakage at higher protection levels.
\end{corollary}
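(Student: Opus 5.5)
The plan is to read the corollary as a monotonicity statement about the upper bound of the preceding theorem, and to prove it in two steps. First, show that $\sigma_{\mathcal{T}}^2$ is nondecreasing in $\mathcal{T}$. Second, show that the mutual-information bound is strictly decreasing in $\sigma_{\mathcal{T}}^2$.

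For the first step, recall that $\varepsilon_{\mathcal{T}} = \varepsilon_{\max} - \mathcal{T}(\varepsilon_{\max}-\varepsilon_{\min})$. This is affine in $\mathcal{T}$ with slope $-(\varepsilon_{\max}-\varepsilon_{\min}) \le 0$, and under the standing assumption $0<\varepsilon_{\min}\le\varepsilon_{\max}$ it stays positive on $[0,1]$. Since
\[
\sigma_{\mathcal{T}}^2 = \frac{2\Delta_2^2\log(1.25/\delta)}{\varepsilon_{\mathcal{T}}^2},
\]
with $\delta<1.25$, the numerator is a positive constant. Hence $\sigma_{\mathcal{T}}^2$ is a decreasing function of $\varepsilon_{\mathcal{T}}$ and therefore nondecreasing in $\mathcal{T}$. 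The increase is strict when $\varepsilon_{\min}<\varepsilon_{\max}$ and $\Delta_2>0$.

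For the second step, set $P=\mathbb{E}[\|X\|^2]$ and
\[
g(s)=\frac{d}{2}\log\!\left(1+\frac{P}{ds}\right), \qquad s>0.
\]
Then $g'(s) = -\frac{d}{2}\cdot\frac{P/(ds^2)}{1+P/(ds)}$. This is strictly negative whenever $P>0$ and identically zero when $P=0$. So $g$ decreases in $s$, and $g(s)\to 0$ as $s\to\infty$. Composing with the first step shows that $g(\sigma_{\mathcal{T}}^2)$ is nonincreasing in $\mathcal{T}$.

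The main obstacle is a subtlety of logic. The theorem only gives $I(X;Z_{\mathcal{T}}) \le g(\sigma_{\mathcal{T}}^2)$, and a decreasing upper bound does not by itself make the quantity it bounds decrease. I would handle this by stating the corollary precisely in terms of the guaranteed leakage bound, and then giving an actual monotonicity argument for the mutual information itself.

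The direct argument for $I(X;Y_{\mathcal{T}})$ uses degradedness of Gaussian channels. For $s_1<s_2$, write $Y_{s_2} = Y_{s_1} + N'$, where $N'\sim\mathcal{N}(0,(s_2-s_1)I)$ is independent of everything else. This gives a Markov chain $X\to Y_{s_1}\to Y_{s_2}$, and the data processing inequality yields $I(X;Y_{s_2})\le I(X;Y_{s_1})$. Since $Z_{\mathcal{T}}=\varphi(Y_{\mathcal{T}})$, the quantity $I(X;Y_{\mathcal{T}})$ is the mechanism's leakage up to the deterministic post-processing by $\varphi$.

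That post-processing is where the argument stops. Extending the chain to $Z_{\mathcal{T}}$ would require $\varphi(Y_{s_2})$ to be a degraded version of $\varphi(Y_{s_1})$. This does not follow because $\varphi$ is non-injective and periodic. Rather than claim exact monotonicity of $I(X;Z_{\mathcal{T}})$, I would therefore state the result as monotone decrease of the certified bound $g(\sigma_{\mathcal{T}}^2)$, together with exact monotone decrease of $I(X;Y_{\mathcal{T}})$.
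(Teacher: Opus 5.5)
Your first two steps are what the paper relies on. It gives no separate proof and reads the corollary off the bound in the preceding theorem, using that $\varepsilon_{\mathcal{T}}$ decreases in $\mathcal{T}$ and that $\frac{d}{2}\log(1+\mathbb{E}[\|X\|^2]/(d\sigma^2))$ decreases in $\sigma^2$. Your caveat that a decreasing upper bound does not make $I(X;Z_{\mathcal{T}})$ itself decrease is correct, and it applies equally to the paper's reasoning. Your degradedness argument for $I(X;Y_{\mathcal{T}})$ is also correct.

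The gap is in your last paragraph, where you conclude that the argument cannot pass through $\varphi$ and fall back to a weaker statement. That obstruction does not exist.

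\begin{itemize}
\item Per coordinate, $a^2+b^2=y^2$ fixes $|y|$ exactly. The periodicity affects only the phase, not the radius.
\item Since $a=y\cos(\alpha y)=y\cos(\alpha|y|)$ is odd in $y$, you recover $y=a/\cos\bigl(\alpha\sqrt{a^2+b^2}\bigr)$ whenever $\cos(\alpha y)\neq 0$.
\item The only collisions are $\varphi(y)=\varphi(-y)$ at the countably many points where $\cos(\alpha y)=0$. So the non-injectivity asserted in the paper's Proposition holds only on a Lebesgue-null set.
\item Because $\sigma_{\mathcal{T}}>0$, each coordinate of $Y_{\mathcal{T}}=X+N$ has a density and avoids this set almost surely. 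Hence $Y_{\mathcal{T}}=\psi(Z_{\mathcal{T}})$ a.s.\ for a Borel map $\psi$.
\item Applying the data processing inequality in both directions gives $I(X;Z_{\mathcal{T}})=I(X;Y_{\mathcal{T}})$.
\end{itemize}

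Combined with your Markov chain $X\to Y_{s_1}\to Y_{s_2}$, this gives $I(X;Z_{\mathcal{T}_2})\le I(X;Z_{\mathcal{T}_1})$ for $\mathcal{T}_1\le\mathcal{T}_2$, which is the corollary as literally stated. If you want strict decrease, it follows for non-degenerate $X$ from the I--MMSE identity, since the MMSE is then positive at every finite signal-to-noise ratio. With this one observation your route proves more than the paper's argument, which establishes only that the bound is monotone.
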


\subsubsection{Geometric Distortion and Inversion Ambiguity}

\noindent The RME transformation further increases ambiguity in reconstruction.

\begin{proposition}
The mapping $\varphi(x_i)=(x_i \cos(\alpha x_i), x_i \sin(\alpha x_i))$ is non-injective and admits multiple valid inverse solutions.
\end{proposition}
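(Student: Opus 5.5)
The plan is to argue coordinatewise and produce an explicit collision. Since $\varphi$ acts on each coordinate independently through the scalar map $g(t) = (t\cos(\alpha t),\, t\sin(\alpha t))$ from $\mathbb{R}$ to $\mathbb{R}^2$, non-injectivity of $\varphi$ on $\mathbb{R}^d$ follows from non-injectivity of $g$ on $\mathbb{R}$. Take two inputs $x, x'$ that agree in all coordinates except one, where they form a colliding pair for $g$. We assume $\alpha \neq 0$; for $\alpha = 0$ we get $g(t) = (t,0)$, which is injective, so the proposition fails.

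First I would characterize the fibres of $g$. Note that $\|g(t)\|_2 = |t|$, so $g(t) = g(s)$ forces $s = \pm t$. The case $s = t$ is trivial. For $s = -t$, use that $\cos$ is even and $\sin$ is odd to get
\begin{equation*}
g(-t) = (-t\cos(\alpha t),\; t\sin(\alpha t)).
\end{equation*}
Hence $g(-t) = g(t)$ holds iff $t\cos(\alpha t) = 0$, that is, iff $t = 0$ or $\alpha t = \tfrac{\pi}{2} + k\pi$ for some $k \in \mathbb{Z}$. This yields an explicit witness: for $t_k = (\tfrac{\pi}{2} + k\pi)/\alpha$ we have
\begin{equation*}
g(t_k) = g(-t_k) = (0,\, |t_k|),
\end{equation*}
with $t_k \neq -t_k$. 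So $g$, and therefore $\varphi$, is non-injective. There are countably many such colliding pairs, unbounded in magnitude, so the ambiguity is not confined to a bounded region.

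Second, I would record that each fibre has at most two points. An observed pair $(a,b)$ therefore admits at most two valid preimages per coordinate, namely $\pm\sqrt{a^2+b^2}$, and both are valid exactly when $a = 0$. This substantiates the claim of "multiple valid inverse solutions" in the literal sense.

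The main obstacle is honesty about scope. The fibre computation shows that $g$ is injective away from the discrete set $\{t : \cos(\alpha t) = 0\}$: the radius recovers $|t|$ and the sign of $\cos(\alpha t)$ resolves the sign. Genuine multiplicity at generic points therefore only arises in the setting of the preceding Theorem on RME inversion, where the pairing is unknown or an adversary uses only the angle. In that setting I would show the following. Fix an output component $a$. If it is mistakenly paired with a wrong partner $b$, the resulting system $t\cos(\alpha t) = a$, $t\sin(\alpha t) = b$ is typically inconsistent. If instead only the angle $\theta = \operatorname{atan2}(b,a)$ is used, one obtains the family $t \in \{(\theta + 2\pi j)/\alpha : j \in \mathbb{Z}\}$, which is where the periodicity count $R$ comes from. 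I would state this second part as a remark supporting $R > 1$, and keep the proof of the proposition itself to the rigorous explicit collision above.
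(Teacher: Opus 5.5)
Your proof is correct. It starts from the same observation as the paper (the modulus gives $x=\pm\sqrt{a^2+b^2}$) but then takes a genuinely different and more rigorous route.

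The paper's proof stops at necessary conditions. It pairs the sign ambiguity with the phase relation $\alpha x=\tan^{-1}(b/a)+2\pi k$ and reads the resulting list of candidates as ``multiple solutions,'' without verifying that the candidates are actual preimages of $(a,b)$. Taken jointly, the modulus constraint already collapses the $2\pi k$ family to at most the two values $\pm\sqrt{a^2+b^2}$. Your fibre computation then shows that only one of these is a genuine preimage unless $a=0$. The phase relation as written is also inaccurate: for $x<0$ the polar angle of $(a,b)$ is $\alpha x+\pi$, and $\tan^{-1}(b/a)$ determines the angle only modulo $\pi$.

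So the paper's argument buys an informal link to the per-pair count $R$ of the preceding theorem, but it does not by itself establish non-injectivity. Yours does:
\begin{itemize}
\item the explicit collision $g(t_k)=g(-t_k)$ at $\cos(\alpha t_k)=0$ proves the claim;
\item the exact fibre description (at most two points, two only on that discrete set) shows how limited the genuine ambiguity is;
\item you supply the hypothesis $\alpha\neq0$, without which the statement is false.
\end{itemize}
Your closing remark correctly locates the periodicity count $R$ in an adversary who uses only the angle, not in the fibres of $\varphi$. You are right to keep it out of the proof itself.

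One small slip: the common value is $g(t_k)=g(-t_k)=(0,(-1)^k t_k)$, not always $(0,|t_k|)$. For instance, with $\alpha>0$ and $k=1$ it equals $(0,-t_1)$. This does not affect the collision. Similarly, ``both are valid exactly when $a=0$'' should exclude the origin, where the two candidates coincide.
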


\begin{proof}
Given $(a,b) = (x \cos(\alpha x), x \sin(\alpha x))$, we obtain $a^2 + b^2 = x^2$ implying $x = \pm \sqrt{a^2 + b^2}$. Additionally, the phase satisfies $\alpha x = \tan^{-1}(b/a) + 2\pi k, \quad k \in \mathbb{Z}$. Thus, multiple solutions exist, making inversion inherently ambiguous.
\end{proof}
{\textit{While differential privacy limits information leakage, the geometric transformation introduces structural ambiguity that increases resistance to reconstruction.}}

\section{Experiments}
\noindent We design a comprehensive experimental framework to evaluate the proposed method against a range of privacy-preserving mechanisms from a reliability perspective. Our evaluation focuses on three aspects: {\textit{($1$) quantification of the privacy–utility Pareto frontier, ($2$) the mechanism's structural resilience to inference attacks \cite{jayaraman2020evaluating}, and ($3$) the isolated empirical impact of Reverse Manifold Embedding via targeted ablation.}}

\subsection{Experimental Setup}
\subsubsection{Datasets and Preprocessing}
\noindent To demonstrate the scalability and generalizability of our framework, we conduct evaluations across three commonly adopted benchmarks of increasing complexity: MNIST, Fashion-MNIST, and CIFAR-10. This selection spans from simple grayscale digits to highly structured, high-dimensional natural images, providing evaluation across diverse data distributions and varying complexity levels. For consistency, all datasets are uniformly subsampled to $10,000$ training instances, normalized to the $[0, 1]$ interval, and flattened into one-dimensional feature vectors prior to any privacy transformations.

\subsubsection{Baselines}
\noindent We benchmark the proposed method against eight established privacy-preserving mechanisms, selected to represent current privacy mechanisms across three distinct paradigms. \cite{cummings2023dp}
\begin{itemize}
    \item {Noise-Injection DP Paradigms:} Standard Gaussian DP \cite{dong2022gdp} and Laplace DP \cite{dwork2014book}, representing classical fixed-budget mechanisms, alongside Personalized Differential Privacy (PDP) \cite{jorgensen2015pdp, ebadi2016pdp}, which dynamically scales privacy budgets per user.
    \item {Geometric and Projection Paradigms:} Random Projection Privacy \cite{bingham2001random, liu2019random} (dimensionality reduction coupled with additive noise) and Reconstruction-Resistant Privacy \cite{aggarwal2008privacy} (projection followed by noise injection and strict $L_2$ normalization).
    \item {Encoding and Hashing Paradigms:} Locality-Sensitive Hashing (LSH) Privacy \cite{indyk1998lsh} and Binary Encoding Privacy (incorporating probabilistic bit-flipping), which obscure data through discrete transformations.
    \item {Additive Noise Baseline:} A simple additive noise method is included as a control to isolate the value of formal DP scaling and geometric distortion.
\end{itemize}

\subsubsection{Evaluation Protocol and Reproducibility}
To ensure reproducibility and algorithmic transparency, the proposed framework and all baseline models are implemented in Python utilizing the Scikit-Learn and NumPy libraries. We use a controlled experimental environment where the trust score is evaluated across a discrete spectrum: $\tau \in \{0.0, 0.1, 0.25, 0.5, 0.75, 0.85, 0.95, 1.0\}$. For the underlying Gaussian mechanism, we rigorously bound the global sensitivity and clipping norm to $\Delta_2 = 1.0$, with a small failure probability of $\delta = 10^{-5}$. Consequently, the trust-adaptive privacy budget smoothly interpolates between a stringent privacy regime ($\epsilon_{\min} = 15.0$ at $\tau = 1.0$) and a high-utility regime ($\epsilon_{\max} = 80.0$ at $\tau = 0.0$). While these values exceed the range of typically considered in strict differential privacy settings, they reflect practical operating regimes where moderate privacy guarantees are acceptable. To reduce stochastic variance and improve statistical reliability, every experiment in our pipeline is averaged across five independent trials initialized with deterministic random seed offsets. For all utility and privacy metrics, we report the mean and standard deviation. Finally, to assess the statistical significance of the performance differential between TADP-RME and baseline methods, we employ paired $t$-tests, establishing statistical significance at the $p < 0.05$ threshold.

\subsection{Comprehensive Evaluation Metrics}
To evaluate the effectiveness of the proposed method from a reliability standpoint, we deploy a dual-faceted evaluation suite that quantifies both the retention of structural utility and the empirical resilience under adversarial conditions.
\subsubsection{Utility Preservation Metrics}

Traditional privacy evaluations often rely solely on downstream classification accuracy, which fails to capture structural degradation induced by protection mechanisms. We employ a multi-faceted utility assessment that quantifies both task-specific performance and structural preservation

\begin{itemize}
    \item {Linear Separability (Classification Utility):} We train a logistic regression classifier as a linear probe on protected representations, evaluating accuracy and weighted F1-score. The linear probe provides an estimate of the mechanism's ability to preserve class separability without relying on complex, parameterized models that may obscure underlying distortion. Let $\mathcal{M}$ be the privacy mechanism, $X$ the original data, and $\hat{X} = \mathcal{M}(X)$ the protected data. We define:
    \begin{equation}
        \text{Acc}_{\text{prot}} = \frac{1}{n} \sum_{i=1}^{n} \mathbf{1}[f_{\text{LR}}(\hat{x}_i) = y_i]
    \end{equation}
    where $f_{\text{LR}}$ is a logistic regression classifier trained on $\hat{X}$.

    \item {Topological Integrity ($k$-NN Overlap):} We quantify local structure preservation by measuring the overlap of $k$-nearest neighbor sets between original and protected feature spaces. For each sample $x_i$, let $\mathcal{N}_k(x_i)$ and $\mathcal{N}_k(\hat{x}_i)$ denote the $k$ nearest neighbors in the original and protected spaces, respectively. The overlap ratio is:
    \begin{equation}
        \text{Overlap}_k = \frac{1}{n} \sum_{i=1}^{n} \frac{|\mathcal{N}_k(x_i) \cap \mathcal{N}_k(\hat{x}_i)|}{k}
    \end{equation}
    We report $\text{Overlap}_k$ for $k \in \{5, 10, 20\}$, where higher values indicate better preservation of local structure.

    \item {Global Distance Preservation:} We measure the preservation of global distance structure using Spearman's rank correlation \cite{spearman1904} between pairwise Euclidean distances in the original and protected spaces. For all pairs $(i,j)$, let $d_{ij} = \|x_i - x_j\|_2$ and $\hat{d}_{ij} = \|\hat{x}_i - \hat{x}_j\|_2$. The Spearman correlation is:
    \begin{equation}
       \rho = 1 - \frac{6 \sum_{i<j} (r_{ij} - \hat{r}_{ij})^2}{m(m^2 - 1)}
    \end{equation}
    where $r_{ij}$ and $\hat{r}_{ij}$ are the ranks of $d_{ij}$ and $\hat{d}_{ij}$, and $m = \binom{n}{2}$ is the number of pairwise distances.$\rho \in [-1,1]$, where $\rho=1$ indicates perfect rank-order preservation and $\rho=0$ indicates no monotonic relationship between distances.
\end{itemize}
These three metrics provide complementary perspectives: classification utility assesses task-specific performance, $k$-NN overlap captures local structure fidelity, and distance correlation measures global geometry preservation. Together, they provide a comprehensive assessment of utility retention under transformation.

\subsubsection{Adversarial Privacy Metrics}
We evaluate empirical privacy through three attack models, which are interpreted as failure events. Each model produces a normalized privacy score $\text{Priv} \in [0,1]$, where higher values indicate stronger resistance to adversarial inference. From a reliability perspective, this score is directly interpretable as $R = 1 - P_{\text{attack}},$ where $P_{\text{attack}}$ denotes the success probability of the corresponding attack. Under this formulation, adversarial success represents a failure event, and the privacy score quantifies the probability of avoiding such failure.
\begin{itemize}
    \item {Membership Inference Attack (MIA):} A logistic regression classifier is trained to distinguish training samples from non-training samples. The privacy score is defined as $\text{Priv}_{\text{MIA}} = 1 - 2|\text{AUC} - 0.5|$, where $\text{AUC}$ denotes the attack performance \cite{shokri2017membership}.
      \item {Attribute Inference Attack (AIA):} A logistic regression model is used to infer sensitive attributes (e.g., class labels) from protected data. The privacy score is defined as $\text{Priv}_{\text{AIA}} = 1 - (\text{Acc} - \text{Baseline})/(1 - \text{Baseline})$, where $\text{Baseline} = 1/C$ for $C$ classes.
     \item {Reconstruction Attack:} Ridge regression is employed to recover original features from protected representations \cite{carlini2021extracting}. The privacy score is defined as $\text{Priv}_{\text{Recon}} = 1 - \frac{\|\hat{x} - x\|_2}{\|x\|_2}$, where $\epsilon = \|\hat{x} - x\|_2 / \|x\|_2$ denotes normalized reconstruction error.
\end{itemize}
The overall privacy score is computed as the mean of the three components 
$\text{Priv}_{\text{Overall}} = \frac{\text{Priv}_{\text{MIA}} + \text{Priv}_{\text{AIA}} + \text{Priv}_{\text{Recon}}}{3}.$
This composite measure provides a unified assessment of empirical privacy, which can also be interpreted as system reliability under adversarial conditions. The arithmetic mean is adopted for its interpretability and equal weighting of complementary threat models, following established evaluation practices \cite{jayaraman2019evaluating}. Each component is normalized to $[0,1]$, where values closer to one indicate minimal adversarial success and therefore higher reliability.

\subsection{Result and Discussion:}
\subsubsection{Privacy-Utility Trade-off}
\begin{table*}[!t]
\centering
\caption{Trust-adaptive privacy-utility trade-off across MNIST, Fashion-MNIST, and CIFAR-10. The inverse trust score $\tau \in [0,1]$ controls the privacy budget as $\varepsilon_{\tau} = 80 - \tau \cdot 65$, where $\tau=0$ corresponds to fully trusted (maximum utility, $\varepsilon=80$) and $\tau=1$ corresponds to fully untrusted (maximum privacy, $\varepsilon=15$). Results show mean $\pm$ standard deviation across 5 independent seeds.}
\label{tab:tradeoff_full}
\setlength{\tabcolsep}{1pt}
\resizebox{\textwidth}{!}{%
\begin{tabular}{ccccccccccc}
\hline
\multicolumn{2}{c}{} & \multicolumn{3}{c}{\textbf{MNIST}} & \multicolumn{3}{c}{\textbf{Fashion-MNIST}} & \multicolumn{3}{c}{\textbf{CIFAR-10}} \\
\cline{3-11}
$\tau$ & $\varepsilon$ & \textbf{Acc.} & \textbf{Priv.} & \textbf{Recon.} & \textbf{Acc.} & \textbf{Priv.} & \textbf{Recon.} & \textbf{Acc.} & \textbf{Priv.} & \textbf{Recon.} \\
\hline
0.0 & 80.0 & $90.0 \pm 0.7$ & $0.399 \pm 0.006$ & $0.114 \pm 0.000$ & $81.7 \pm 0.6$ & $0.429 \pm 0.001$ & $0.104 \pm 0.001$ & $38.4 \pm 1.4$ & $0.694 \pm 0.004$ & $0.296 \pm 0.004$ \\
0.1 & 73.5 & $72.7 \pm 0.4$ & $0.561 \pm 0.002$ & $0.517 \pm 0.001$ & $63.1 \pm 0.9$ & $0.585 \pm 0.003$ & $0.457 \pm 0.002$ & $15.8 \pm 2.3$ & $0.804 \pm 0.003$ & $0.479 \pm 0.001$ \\
0.25 & 63.8 & $68.7 \pm 0.7$ & $0.586 \pm 0.005$ & $0.551 \pm 0.001$ & $59.6 \pm 0.7$ & $0.600 \pm 0.006$ & $0.475 \pm 0.002$ & $14.8 \pm 1.4$ & $0.813 \pm 0.002$ & $0.499 \pm 0.003$ \\
0.5 & 47.5 & $57.8 \pm 1.0$ & $0.631 \pm 0.004$ & $0.616 \pm 0.002$ & $51.6 \pm 1.1$ & $0.640 \pm 0.005$ & $0.515 \pm 0.002$ & $12.4 \pm 1.4$ & $0.831 \pm 0.003$ & $0.539 \pm 0.002$ \\
0.75 & 31.3 & $40.6 \pm 1.5$ & $0.705 \pm 0.003$ & $0.700 \pm 0.002$ & $38.7 \pm 0.8$ & $0.704 \pm 0.004$ & $0.570 \pm 0.003$ & $11.2 \pm 1.0$ & $0.858 \pm 0.006$ & $0.600 \pm 0.001$ \\
0.85 & 24.8 & $32.2 \pm 1.1$ & $0.737 \pm 0.003$ & $0.738 \pm 0.002$ & $32.4 \pm 1.0$ & $0.729 \pm 0.005$ & $0.598 \pm 0.003$ & $10.9 \pm 1.1$ & $\mathbf{0.862 \pm 0.005}$ & $\mathbf{0.619 \pm 0.002}$ \\
0.95 & 18.3 & $24.5 \pm 0.4$ & $0.764 \pm 0.006$ & $0.767 \pm 0.002$ & $25.6 \pm 1.1$ & $0.762 \pm 0.004$ & $0.622 \pm 0.001$ & $10.4 \pm 1.0$ & $0.858 \pm 0.004$ & $0.601 \pm 0.002$ \\
1.0 & 15.0 & $26.6 \pm 1.2$ & $\mathbf{0.782 \pm 0.003}$ & $\mathbf{0.760 \pm 0.001}$ & $26.8 \pm 1.1$ & $\mathbf{0.756 \pm 0.006}$ & $\mathbf{0.618 \pm 0.001}$ & $10.1 \pm 0.8$ & $0.851 \pm 0.004$ & $0.576 \pm 0.003$ \\
\hline
\end{tabular}%
}
\end{table*}

\noindent Table~\ref{tab:tradeoff_full} summarizes the trust-adaptive privacy-utility trade-off across datasets. From a reliability perspective, the reported privacy scores can be interpreted as the probability of avoiding adversarial failure, where higher values indicate stronger resistance to inference attacks.
\begin{figure}[ht]
    \centering
    \includegraphics[width=\linewidth]{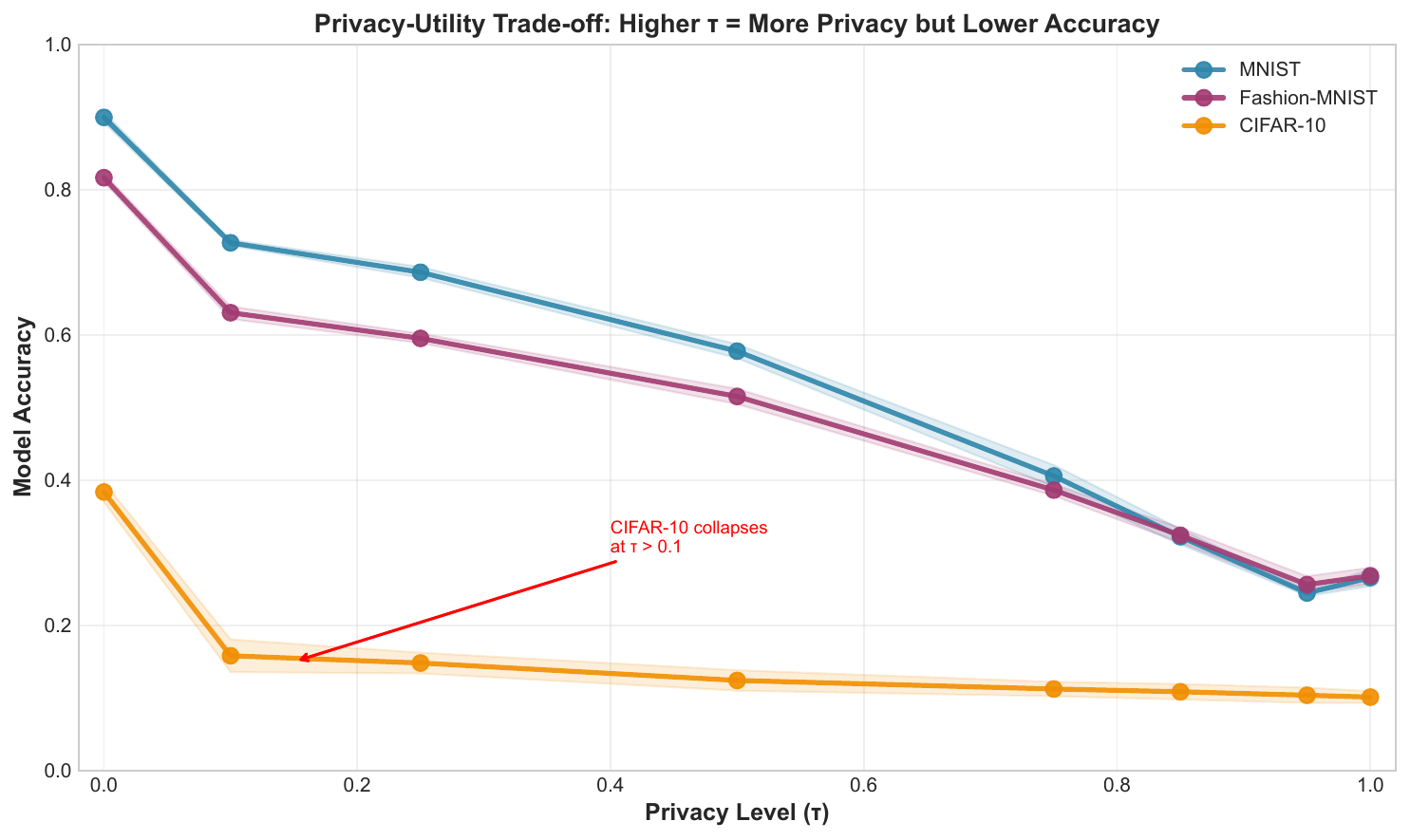}
    \caption{Privacy-utility trade-off across datasets. Accuracy decreases as $\tau$ increases, with CIFAR-10 exhibiting rapid degradation at low privacy levels.}
    \label{fig:tradeoff_curves}
\end{figure}

\begin{figure}[ht]
    \centering
    \includegraphics[width=0.9\linewidth]{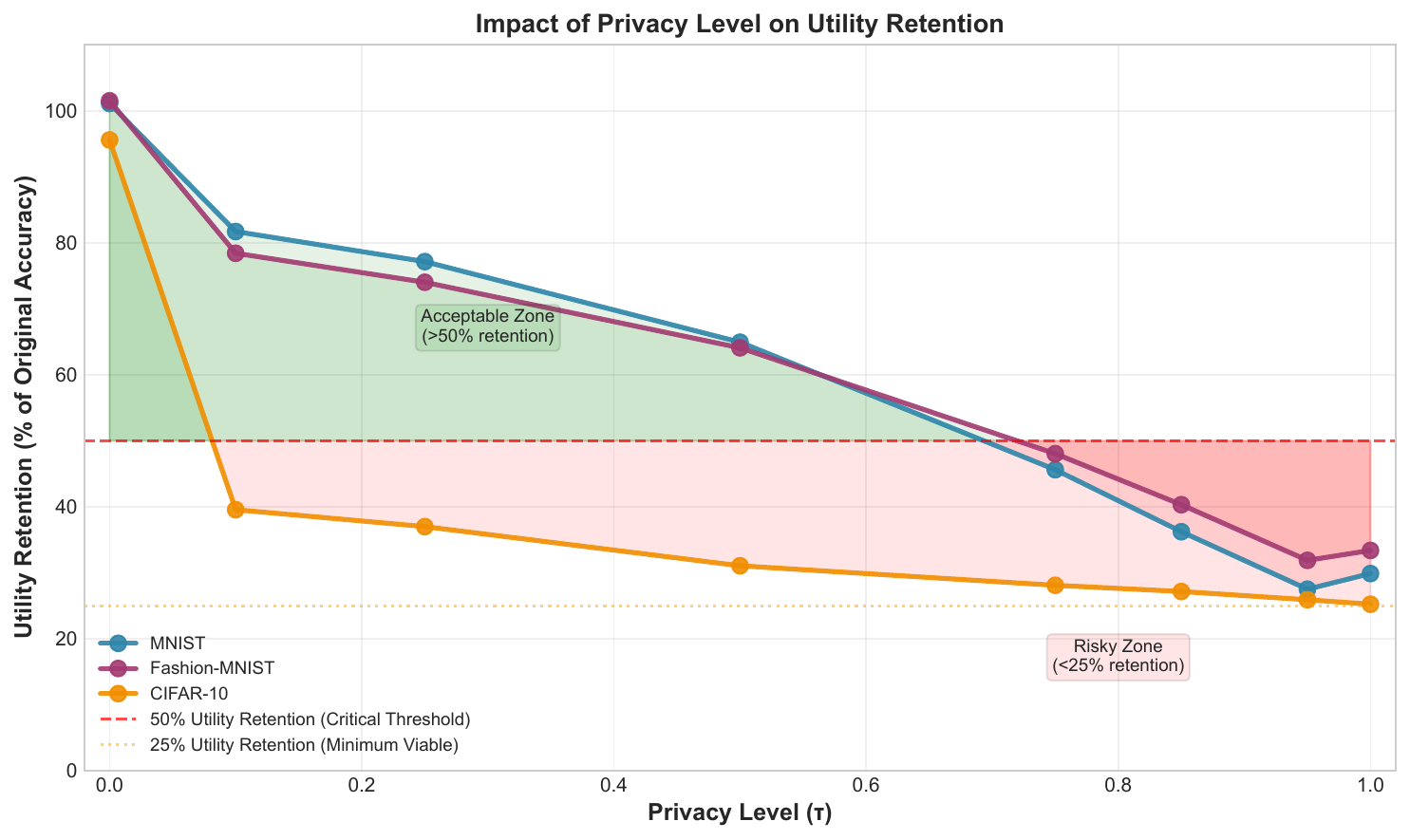}
    \caption{Utility retention across privacy levels $\tau$. Higher $\tau$ leads to reduced utility, with a critical transition around $\tau \approx 0.5$.}
    \label{fig:utility_retention}
\end{figure}

Figures~\ref{fig:tradeoff_curves} and~\ref{fig:utility_retention}, together with Table~\ref{tab:tradeoff_full}, show the privacy-utility trade-off observed for the proposed TADP-RME framework. As the inverse trust score $\tau$ increases, the privacy budget $\varepsilon_{\tau}$ decreases, resulting in stronger noise injection and higher empirical privacy scores, which correspond to increased reliability against adversarial inference, at the cost of reduced utility. Across all datasets, classification accuracy generally decreases while privacy scores increase, indicating improved resistance to adversarial failure. At $\tau=0$, corresponding to the fully trusted regime, all datasets achieve maximum utility with relatively low resistance to adversarial inference. In contrast, at $\tau=1$, the mechanism operates under higher privacy settings, where empirical privacy scores are highest, corresponding to stronger reliability against adversarial inference, but utility is reduced. A noticeable transition occurs around $\tau \approx 0.5$, where utility drops substantially, falling below approximately 60\% for MNIST and Fashion-MNIST and to lower values for CIFAR-10. This regime may represent a practical operating point balancing utility and reliability under adversarial conditions, consistent with the moderate privacy setting ($\varepsilon = 47.5$). Dataset-specific behavior highlights the role of data complexity. MNIST exhibits relatively stable performance across privacy levels, while Fashion-MNIST shows moderate sensitivity. In contrast, CIFAR-10 exhibits rapid utility degradation even at low $\tau$, suggesting that high-dimensional datasets may be more sensitive to perturbations. This behavior indicates that increasing privacy levels directly reduces adversarial success probability, thereby improving system reliability while introducing a trade-off with predictive performance. Figure~\ref{fig:utility_retention} further highlights two distinct operating regions:{\textit{(i) a higher utility regime above 50\% retention and (ii) a low utility regime below 25\%, where performance approaches random guessing.}} CIFAR-10 enters the low-utility regime at lower $\tau$ values, indicating its higher sensitivity to privacy constraints.
\begin{figure}[ht]
    \centering
    \includegraphics[width=\linewidth]{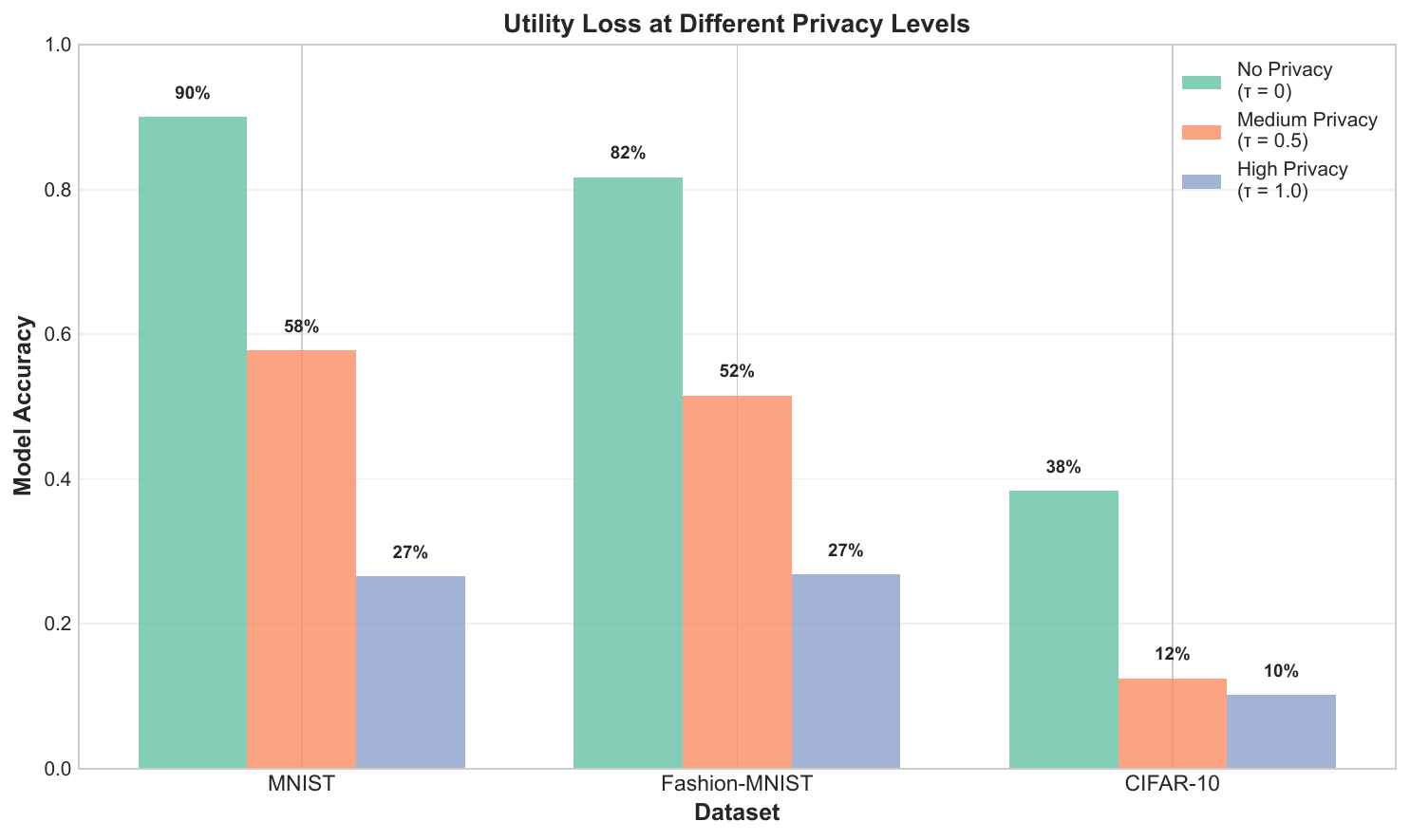}
    \caption{Utility comparison across privacy regimes ($\tau=0$, $0.5$, and $1$). Moderate privacy provides a balance between accuracy and protection, while strong privacy leads to significant degradation.}
    \label{fig:utility_loss}
\end{figure}
Figure~\ref{fig:utility_loss} provides a discrete comparison across representative privacy regimes. Moderate privacy ($\tau=0.5$) retains moderate utility across all datasets, while strong privacy ($\tau=1$) leads to significant degradation, particularly for CIFAR-10. These results suggest that moderate privacy provides a balance between utility and protection. Overall, the proposed framework provides a controllable mechanism for exploring the trade-off between utility and reliability under adversarial conditions, enabling flexible adjustment of privacy levels while maintaining usable performance.

\subsubsection{Comparison with Baseline Methods}
\begin{table*}[ht]
\centering
\caption{Comparison with baseline methods at matched privacy budgets. For DP baselines, we evaluate at $\varepsilon = 15$ (strong privacy), $\varepsilon = 47.5$ (moderate privacy), and $\varepsilon = 80$ (weak privacy). TADP-RME is evaluated at $\tau = 1$, $\tau = 0.5$, and $\tau = 0$, corresponding to $\varepsilon = 15$, $47.5$, and $80$, respectively. Non-DP baselines are evaluated with their default configurations. Results show mean accuracy and privacy score across 5 independent seeds, where privacy score is the average of membership inference, reconstruction, and attribute inference privacy scores (higher values indicate stronger empirical privacy).}
\label{tab:baseline_comparison}
\setlength{\tabcolsep}{5pt}
\begin{tabular}{lcccccc}
\hline
\multirow{2}{*}{\textbf{Method}} & \multicolumn{2}{c}{\textbf{MNIST}} & \multicolumn{2}{c}{\textbf{Fashion-MNIST}} & \multicolumn{2}{c}{\textbf{CIFAR-10}} \\
\cline{2-7}
 & \textbf{Accuracy} & \textbf{Privacy} & \textbf{Accuracy} & \textbf{Privacy} & \textbf{Accuracy} & \textbf{Privacy} \\
\hline

\multicolumn{7}{c}{\textbf{$\varepsilon = 15$ (Strong Privacy)}} \\
\hline
Gaussian DP & $26.8 \pm 1.2$ & $0.784 \pm 0.007$ & $27.1 \pm 1.5$ & $0.757 \pm 0.013$ & $11.0 \pm 0.9$ & $0.848 \pm 0.004$ \\
Laplace DP & $58.9 \pm 1.4$ & $0.585 \pm 0.010$ & $55.1 \pm 0.7$ & $0.625 \pm 0.007$ & $13.2 \pm 0.8$ & $0.804 \pm 0.003$ \\
Personalized DP & $12.1 \pm 0.7$ & $0.839 \pm 0.003$ & $12.9 \pm 0.8$ & $0.815 \pm 0.005$ & $10.3 \pm 0.6$ & $0.861 \pm 0.003$ \\
\textbf{TADP-RME ($\tau=1$)} & $\mathbf{26.6 \pm 1.2}$ & $\mathbf{0.782 \pm 0.003}$ & $\mathbf{26.8 \pm 1.1}$ & $\mathbf{0.756 \pm 0.006}$ & $\mathbf{10.1 \pm 0.8}$ & $\mathbf{0.851 \pm 0.004}$ \\

\hline
\multicolumn{7}{c}{\textbf{$\varepsilon = 47.5$ (Moderate Privacy)}} \\
\hline
Gaussian DP & $57.7 \pm 0.7$ & $0.605 \pm 0.001$ & $53.2 \pm 1.3$ & $0.634 \pm 0.007$ & $13.7 \pm 0.8$ & $0.818 \pm 0.003$ \\
Laplace DP & $80.9 \pm 0.8$ & $0.468 \pm 0.006$ & $70.7 \pm 1.1$ & $0.531 \pm 0.008$ & $22.2 \pm 0.9$ & $0.728 \pm 0.003$ \\
Personalized DP & $12.1 \pm 0.7$ & $0.840 \pm 0.006$ & $12.9 \pm 0.8$ & $0.814 \pm 0.005$ & $10.3 \pm 0.6$ & $0.854 \pm 0.003$ \\
\textbf{TADP-RME ($\tau=0.5$)} & $\mathbf{57.8 \pm 1.0}$ & $\mathbf{0.631 \pm 0.004}$ & $\mathbf{51.6 \pm 1.1}$ & $\mathbf{0.640 \pm 0.005}$ & $\mathbf{12.4 \pm 1.4}$ & $\mathbf{0.818 \pm 0.003}$ \\

\hline
\multicolumn{7}{c}{\textbf{$\varepsilon = 80$ (Weak Privacy)}} \\
\hline
Gaussian DP & $71.0 \pm 0.2$ & $0.530 \pm 0.004$ & $62.8 \pm 1.0$ & $0.581 \pm 0.004$ & $16.9 \pm 1.1$ & $0.776 \pm 0.004$ \\
Laplace DP & $84.3 \pm 1.1$ & $0.445 \pm 0.007$ & $73.7 \pm 0.5$ & $0.506 \pm 0.006$ & $27.5 \pm 1.7$ & $0.701 \pm 0.002$ \\
Personalized DP & $12.1 \pm 0.7$ & $0.841 \pm 0.004$ & $12.9 \pm 0.8$ & $0.816 \pm 0.006$ & $10.3 \pm 0.6$ & $0.858 \pm 0.003$ \\
\textbf{TADP-RME ($\tau=0$)} & $\mathbf{90.0 \pm 0.7}$ & $0.399 \pm 0.006$ & $\mathbf{81.7 \pm 0.6}$ & $0.429 \pm 0.001$ & $\mathbf{38.4 \pm 1.4}$ & $0.694 \pm 0.004$ \\

\hline
\multicolumn{7}{c}{\textbf{Non-DP Baselines}} \\
\hline
Random Projection & $11.5 \pm 3.7$ & $0.179 \pm 0.002$ & $15.2 \pm 5.7$ & $0.171 \pm 0.003$ & $9.9 \pm 1.2$ & $0.331 \pm 0.003$ \\
Additive Noise & $85.9 \pm 0.5$ & $0.425 \pm 0.004$ & $76.8 \pm 0.6$ & $0.458 \pm 0.005$ & $39.6 \pm 1.6$ & $0.638 \pm 0.005$ \\
LSH & $9.3 \pm 1.4$ & $0.232 \pm 0.004$ & $11.7 \pm 4.2$ & $0.212 \pm 0.004$ & $9.7 \pm 0.6$ & $0.379 \pm 0.004$ \\
Binary Encoding & $11.6 \pm 0.7$ & $0.841 \pm 0.003$ & $10.0 \pm 0.7$ & $0.826 \pm 0.001$ & $10.4 \pm 0.8$ & $0.792 \pm 0.003$ \\
Reconstruction-Resistant & $11.1 \pm 3.0$ & $0.219 \pm 0.002$ & $11.2 \pm 3.6$ & $0.201 \pm 0.003$ & $9.6 \pm 0.8$ & $0.357 \pm 0.004$ \\

\hline
\end{tabular}
\end{table*}
\noindent Table~\ref{tab:baseline_comparison} presents a comparison between the proposed framework and representative privacy-preserving methods under matched privacy budgets. From a reliability perspective, the reported privacy scores can be interpreted as resistance to adversarial failure. For differential privacy baselines, we consider three regimes corresponding to strong ($\varepsilon=15$), moderate ($\varepsilon=47.5$), and weak ($\varepsilon=80$) privacy. At strong privacy ($\varepsilon=15$), all methods exhibit reduced utility due to increased noise levels. Personalized DP achieves the highest privacy scores but at the cost of near-random accuracy across all datasets. In contrast, the proposed method (TADP-RME) achieves comparable privacy scores while slightly higher utility, suggesting a more favorable balance between reliability and usability under adversarial conditions. At moderate privacy ($\varepsilon=47.5$), classical mechanisms such as Gaussian and Laplace DP achieve higher accuracy but exhibit noticeably lower privacy scores. The proposed method, evaluated at the corresponding operating point ($\tau=0.5$ in Table~\ref{tab:tradeoff_full}), achieves higher privacy scores, indicating improved resistance to adversarial inference, while maintaining comparable accuracy. This indicates a more favorable trade-off between utility and reliability compared to standard noise-based approaches. At weak privacy ($\varepsilon=80$), utility improves for all methods, particularly Laplace DP and additive noise. However, these gains come at the cost of reduced privacy, highlighting the trade-off in fixed-noise mechanisms that cannot simultaneously preserve high utility and strong resistance to adversarial inference. Among non-DP baselines, Random Projection and LSH exhibit poor utility, indicating that aggressive structural transformations may degrade task performance. Binary encoding achieves high privacy but results in near-random accuracy, limiting its applicability for downstream tasks. Additive noise achieves high accuracy but provides relatively limited privacy protection. In this context, higher privacy scores correspond to lower adversarial success probability, and therefore reflect improved system reliability under inference attacks. {\textit{By combining trust-adaptive noise with geometric transformation, TADP-RME improves empirical privacy while limiting utility degradation, provides a competitive balance between utility and reliability compared to both noise-based and transformation-based baselines.}}To validate whether the observed differences are statistically significant, we perform paired $t$-tests between TADP-RME and each baseline across five independent runs. The results indicate that, in most cases, the improvements in privacy scores achieved by TADP-RME, corresponding to increased reliability, at matched privacy budgets are statistically significant ($p < 0.05$), while differences in accuracy are generally comparable or exhibit smaller variance. These findings support that the observed privacy-utility trade-offs are not due to random variation, but reflect consistent performance trends across datasets.

\subsubsection{Attack Resilience Analysis}
\begin{table*}[ht]
\centering
\caption{Attack resilience at $\tau=0.0$ (weak privacy, $\varepsilon=80$) and $\tau=1$ (maximum privacy, $\varepsilon=15$). MIA (membership inference attack) privacy score measures protection against training set membership leakage, where $1$ indicates random guessing. Reconstruction privacy score quantifies resistance to feature inversion, with higher values indicating greater reconstruction difficulty. AIA (attribute inference attack) privacy score measures protection against label inference, normalized against random guessing baseline ($1/C$ for $C$ classes). All scores are normalized to $[0,1]$, where $1$ represents maximum resistance to adversarial inference. Results show mean $\pm$ standard deviation across 5 seeds.}
\label{tab:attack_summary}
\begin{tabular}{lcccccc}
\hline
\multirow{2}{*}{\textbf{Dataset}} & \multicolumn{3}{c}{\textbf{$\tau=0.0$}} & \multicolumn{3}{c}{\textbf{$\tau=1$}} \\
\cline{2-7}
 & \textbf{MIA} & \textbf{Recon} & \textbf{AIA} & \textbf{MIA} & \textbf{Recon} & \textbf{AIA} \\
\hline
MNIST & $0.541 \pm 0.006$ & $0.114 \pm 0.000$ & $0.543 \pm 0.006$ & $0.988 \pm 0.004$ & $0.532 \pm 0.000$ & $0.827 \pm 0.006$ \\
Fashion-MNIST & $0.592 \pm 0.005$ & $0.104 \pm 0.001$ & $0.591 \pm 0.005$ & $0.984 \pm 0.011$ & $0.461 \pm 0.001$ & $0.822 \pm 0.008$ \\
CIFAR-10 & $0.895 \pm 0.004$ & $0.296 \pm 0.004$ & $0.891 \pm 0.004$ & $0.987 \pm 0.009$ & $0.576 \pm 0.003$ & $0.981 \pm 0.006$ \\
\hline
\end{tabular}
\end{table*}
\noindent Table~\ref{tab:attack_summary} evaluates the empirical privacy of the proposed framework using three complementary attack models: membership inference (MIA), attribute inference (AIA), and reconstruction attacks. Each metric is normalized to $[0,1]$, where higher values indicate stronger privacy, and the overall score represents their arithmetic mean. From a reliability perspective, these scores correspond to resistance against adversarial failure, where higher values indicate improved system reliability. At low privacy ($\tau=0$), the privacy scores for all three attack models are relatively lower, suggesting that the protected representations still retain exploitable information. In particular, reconstruction privacy is weakest in this regime, reflecting the ability of an adversary to recover original features with low normalized error. As the privacy level increases, all three metrics show an increasing trend, indicating reduced adversarial success probability and improved reliability. The MIA score approaches values close to $1$, indicating that the attack classifier performs no better than random guessing (AUC $\approx 0.5$), thus reducing membership leakage. Similarly, AIA scores increase toward the random baseline ($1/C$), showing reduced predictability of sensitive attributes from protected embeddings. Reconstruction privacy exhibits the most pronounced improvement with increasing $\tau$. This trend signifies that higher perturbation levels increase reconstruction error and reduce the feasibility of inversion attacks. inversion attacks. This suggests that the combination of noise injection and geometric transformation introduces structural distortion that reduces feature-level recoverability. Dataset-specific trends provide additional insight into the behavior of the method. CIFAR-10 exhibits relatively higher baseline privacy due to its inherent complexity, but still shows consistent improvement across all attack metrics. In contrast, MNIST and Fashion-MNIST demonstrate more prominent relative improvements, indicating that privacy mechanisms more strongly affect exploitable structure in simpler datasets. Overall, the consistent improvement across MIA, AIA, and reconstruction metrics suggests that the proposed TADP-RME provides comprehensive protection against diverse inference attacks, thereby improving system reliability under adversarial conditions. {\textit{The alignment between the three components of the composite privacy score further indicates that the proposed method performs consistently across different attack models, ensuring stable reliability across multiple adversarial scenarios, without relying on a single threat scenario.}}

\subsubsection{Structural Preservation Analysis}
\begin{figure*}[ht]
    \centering
    \includegraphics[width=\textwidth]{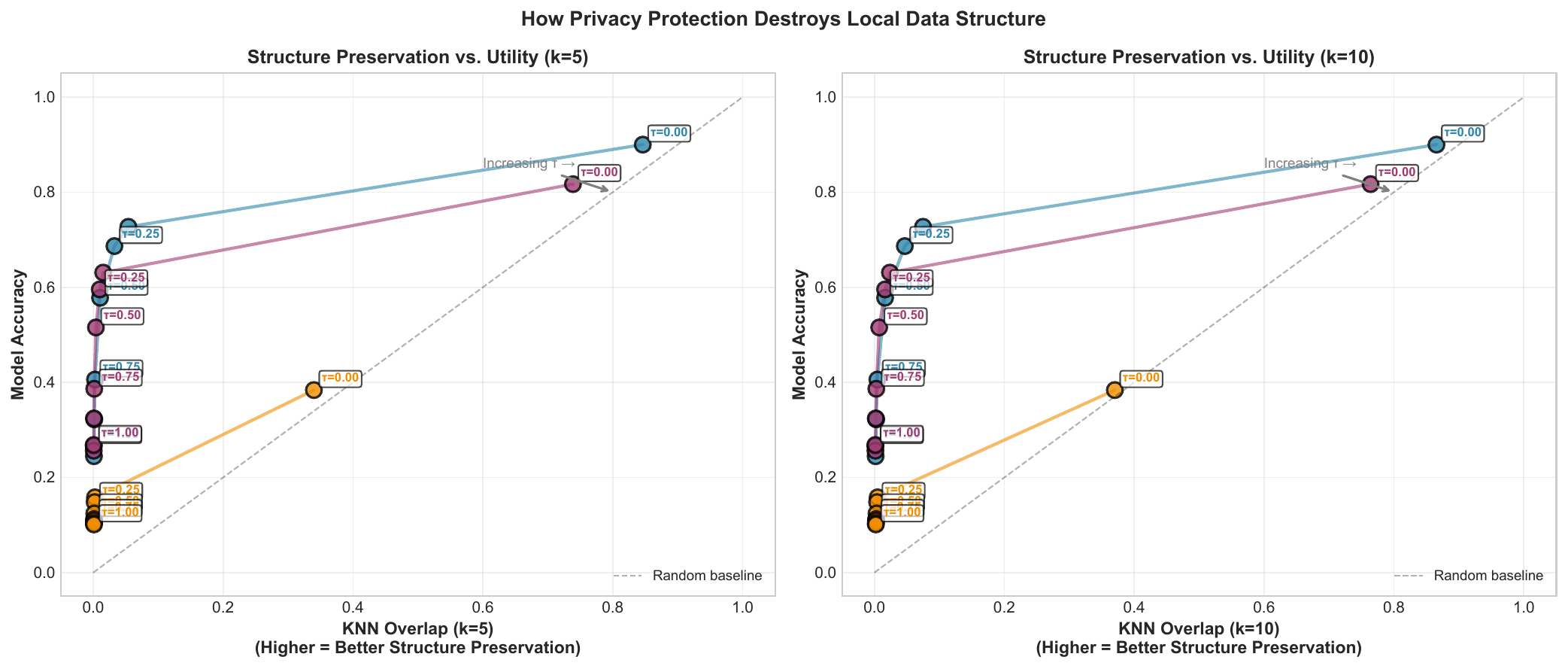}
    \caption{Structure preservation versus utility under varying privacy levels $\tau$, measured using k-NN overlap for $k=5$ and $k=10$. Higher overlap indicates better preservation of local geometry. Results are shown for MNIST (blue), Fashion-MNIST (purple), and CIFAR-10 (orange). As $\tau$ increases, both structure preservation and accuracy decrease. The trends are consistent across both neighborhood sizes, with $k=10$ exhibiting slightly smoother degradation compared to the more locally sensitive $k=5$, indicating that the observed behavior is consistent across different neighborhood scales.}
    \label{fig:knn_structure}
\end{figure*}
\noindent Figure~\ref{fig:knn_structure} analyzes the relationship between structural preservation and model utility under varying privacy levels $\tau$, using k-NN overlap as a measure of local geometric consistency, which reflects the preservation of structural information under adversarial conditions. At low privacy levels ($\tau \approx 0$), high k-NN overlap is observed across all datasets, signifying that neighborhood relationships are largely preserved. This is associated with higher classification accuracy, as the underlying data structure remains intact. As $\tau$ increases, both k-NN overlap and accuracy decrease, indicating increasing structural disruption and reduced exploitable information, reflecting progressive distortion of local neighborhoods due to noise injection and geometric transformation. The degradation trend is similar across both neighborhood sizes ($k=5$ and $k=10$), indicating that the trend is not sensitive to the choice of $k$. However, $k=5$ exhibits slightly sharper declines, as it captures more localized relationships, while $k=10$ shows smoother degradation due to its broader neighborhood definition. Despite these differences, both settings exhibit similar trends: increasing privacy systematically disrupts local data geometry, reducing the structural cues that can be exploited by adversarial inference. This effect is particularly pronounced for CIFAR-10, where k-NN overlap decreases significantly, which is aligned with classification accuracy. In contrast, MNIST and Fashion-MNIST retain higher structural consistency at moderate privacy levels, which explains their relatively stable utility. Overall, the results signify that the proposed TADP-RME achieves privacy not only through noise injection but also by disrupting local geometric structure. Since many inference and reconstruction attacks rely on preserving neighborhood relationships, {\textit{this structural degradation contributes to reduced attack effectiveness and improved resistance to adversarial inference, complementing formal differential privacy guarantees.}}This observation is consistent with the improvements in empirical privacy observed in Section~\ref{tab:attack_summary}, and further supports the interpretation of increased privacy as improved system reliability under adversarial conditions.

\subsubsection{Ablation Study}
\begin{table*}[ht]
\centering
\caption{Ablation study at $\tau=0.5$ (moderate privacy, $\varepsilon=47.5$). Noise-only applies the Gaussian mechanism without geometric transformation; embedding-only applies polar transformation without stochastic noise; fixed-$\tau$ (non-adaptive) uses the full pipeline with a constant trust level; full pipeline corresponds to the adaptive TADP-RME framework. Results are reported as mean $\pm$ standard deviation across 5 runs.}
\label{tab:ablation}
\setlength{\tabcolsep}{5pt}
\begin{tabular}{lcccccc}
\hline
\multirow{2}{*}{\textbf{Component}} & \multicolumn{2}{c}{\textbf{MNIST}} & \multicolumn{2}{c}{\textbf{Fashion-MNIST}} & \multicolumn{2}{c}{\textbf{CIFAR-10}} \\
\cline{2-7}
 & \textbf{Acc (\%)} & \textbf{Priv} & \textbf{Acc (\%)} & \textbf{Priv} & \textbf{Acc (\%)} & \textbf{Priv} \\
\hline
Noise Only & $58.2 \pm 0.6$ & $0.605 \pm 0.002$ & $54.7 \pm 0.6$ & $0.609 \pm 0.003$ & $14.7 \pm 0.6$ & $0.807 \pm 0.003$ \\
Embedding Only & $90.0 \pm 0.4$ & $0.400 \pm 0.004$ & $81.7 \pm 0.6$ & $0.429 \pm 0.001$ & $38.4 \pm 1.4$ & $0.694 \pm 0.004$ \\
Fixed-$\tau$ (Non-adaptive) & $58.0 \pm 1.0$ & $0.633 \pm 0.003$ & $51.9 \pm 0.9$ & $0.643 \pm 0.003$ & $12.6 \pm 0.5$ & $0.822 \pm 0.004$ \\
\hline
\textbf{Full Pipeline (Adaptive)} & $\mathbf{57.8 \pm 1.0}$ & $\mathbf{0.631 \pm 0.004}$ & $\mathbf{51.6 \pm 1.1}$ & $\mathbf{0.640 \pm 0.005}$ & $\mathbf{12.4 \pm 1.4}$ & $\mathbf{0.818 \pm 0.003}$ \\
\hline
\multicolumn{7}{c}{\textit{Privacy Improvement over Noise Only}} \\
\multicolumn{2}{c}{} & \textbf{+0.026} & \multicolumn{2}{c}{\textbf{+0.031}} & \multicolumn{2}{c}{\textbf{+0.011}} \\
\hline
\end{tabular}
\end{table*}
\noindent Table~\ref{tab:ablation} evaluates the contribution of individual components in the proposed framework from a reliability perspective at a fixed privacy level ($\tau=0.5$, $\varepsilon=47.5$). The noise-only variant applies Gaussian noise without geometric transformation. While it provides moderate privacy, its resistance to adversarial inference is limited, indicating that noise injection alone is insufficient to fully mitigate inference attacks. The embedding-only variant applies geometric transformation without stochastic noise. This configuration achieves higher accuracy due to the absence of noise-based perturbation, but provides weaker resistance to adversarial inference, as structural information remains partially exploitable by adversaries. The fixed-$\tau$ pipeline combines noise and embedding but operates with a constant trust level. Compared to noise-only, it achieves improved privacy, highlighting the benefit of incorporating geometric distortion to reduce exploitable structure. However, it lacks the flexibility of adaptive trust control. The full pipeline (TADP-RME) integrates both components within a unified framework. It consistently achieves higher privacy than the noise-only variant, corresponding to improved resistance to adversarial failure, while maintaining comparable accuracy. In particular, it improves privacy by $2.6\%$, $3.1\%$, and $1.1\%$ on MNIST, Fashion-MNIST, and CIFAR-10, respectively, without introducing additional degradation in utility. These results indicate {\textit{that noise injection and geometric transformation provide complementary benefits in improving resistance to adversarial inference.}} Noise contributes formal differential privacy guarantees, while embedding disrupts local structural patterns that noise alone cannot effectively conceal. Their combination is therefore essential for achieving a balanced trade-off between utility and reliability under adversarial conditions. These results indicate that combining stochastic noise with structural transformation reduces adversarial success probability, thereby improving system reliability while preserving practical utility.

\subsubsection{Parameter Sensitivity}
\begin{table}[ht]
\centering
\setlength{\tabcolsep}{1.2pt}   
\caption{Parameter sensitivity analysis on MNIST at $\tau=0.5$ ($\varepsilon=47.5$). Results are reported as mean $\pm$ standard deviation across 5 runs, where higher privacy scores indicate stronger resistance to adversarial inference.}
\label{tab:sensitivity}
\begin{tabular}{lcccc}
\hline
\textbf{Parameter} & \textbf{Value} & \textbf{Accuracy (\%)} & \textbf{Privacy Score} & \textbf{Reconstruction Error} \\
\hline
\multirow{4}{*}{$\varepsilon_{\min}$} & 10 & $55.0 \pm 1.1$ & $0.637 \pm 0.005$ & $0.628 \pm 0.002$ \\
 & 15 & $57.8 \pm 1.0$ & $0.631 \pm 0.004$ & $0.616 \pm 0.002$ \\
 & 20 & $59.8 \pm 0.7$ & $0.627 \pm 0.003$ & $0.607 \pm 0.002$ \\
 & 30 & $63.9 \pm 0.6$ & $0.606 \pm 0.004$ & $0.586 \pm 0.002$ \\
\hline
\multirow{4}{*}{$\varepsilon_{\max}$} & 40 & $45.8 \pm 1.0$ & $0.672 \pm 0.004$ & $0.666 \pm 0.003$ \\
 & 60 & $54.2 \pm 0.9$ & $0.653 \pm 0.006$ & $0.614 \pm 0.002$ \\
 & 80 & $57.8 \pm 1.0$ & $0.631 \pm 0.004$ & $0.616 \pm 0.002$ \\
 & 100 & $63.7 \pm 0.6$ & $0.609 \pm 0.004$ & $0.575 \pm 0.001$ \\
\hline
\multirow{3}{*}{Clip Norm $C$} & 0.5 & $31.4 \pm 1.3$ & $0.752 \pm 0.004$ & $0.747 \pm 0.001$ \\
 & 1.0 & $57.8 \pm 1.0$ & $0.631 \pm 0.004$ & $0.616 \pm 0.002$ \\
 & 2.0 & $78.7 \pm 0.7$ & $0.539 \pm 0.003$ & $0.454 \pm 0.001$ \\
\hline
\end{tabular}
\end{table}
\noindent Table~\ref{tab:sensitivity} analyzes the effect of key hyperparameters on the performance of TADP-RME from a reliability perspective at $\tau=0.5$. The minimum privacy budget $\varepsilon_{\min}$ controls the strength of protection in high-privacy regions. As $\varepsilon_{\min}$ increases from 10 to 30, classification accuracy improves steadily (from $55.0\%$ to $63.9\%$), while the privacy score decreases slightly. This indicates that relaxing the lower bound of privacy allows more information to be preserved, improving utility at the cost of reduced resistance to adversarial inference. A similar trend is observed for $\varepsilon_{\max}$, which determines the upper bound of the privacy budget. Increasing $\varepsilon_{\max}$ from 40 to 100 leads to a substantial gain in accuracy (from $45.8\%$ to $63.7\%$), accompanied by a decrease in privacy score, indicating reduced resistance to adversarial failure. This demonstrates that a larger upper bound enables higher utility in low-privacy regions, effectively increasing the utility ceiling of the framework. The clipping norm $C$ has a more pronounced impact on the trade-off. A smaller value ($C=0.5$) enforces strong regularization, resulting in high privacy ($0.752$), corresponding to strong resistance to adversarial inference, but significantly reduced accuracy ($31.4\%$). Conversely, a larger value ($C=2.0$) preserves more information, achieving high accuracy ($78.7\%$) but weaker resistance to adversarial inference ($0.539$). The intermediate setting ($C=1.0$) provides a balanced trade-off, maintaining reasonable accuracy ($57.8\%$) while preserving moderate privacy ($0.631$). Reconstruction error follows a consistent trend with privacy, decreasing as accuracy increases. This indicates that higher utility corresponds to improved reconstructability of the data, reinforcing the inherent trade-off between resistance to adversarial inference and information retention. Overall, the results demonstrate that our method offers intuitive and flexible control over the trade-off between utility and reliability under adversarial conditions through parameter tuning. By adjusting $\varepsilon_{\min}$, $\varepsilon_{\max}$, and $C$, practitioners can adapt the framework to different application requirements while maintaining predictable behavior. These trends indicate that parameter choices directly influence adversarial success probability, allowing controlled adjustment of system reliability alongside predictive performance.

\subsubsection{Global Structure and Efficiency Analysis}
\begin{figure}[ht]
    \centering
    \includegraphics[width=\linewidth]{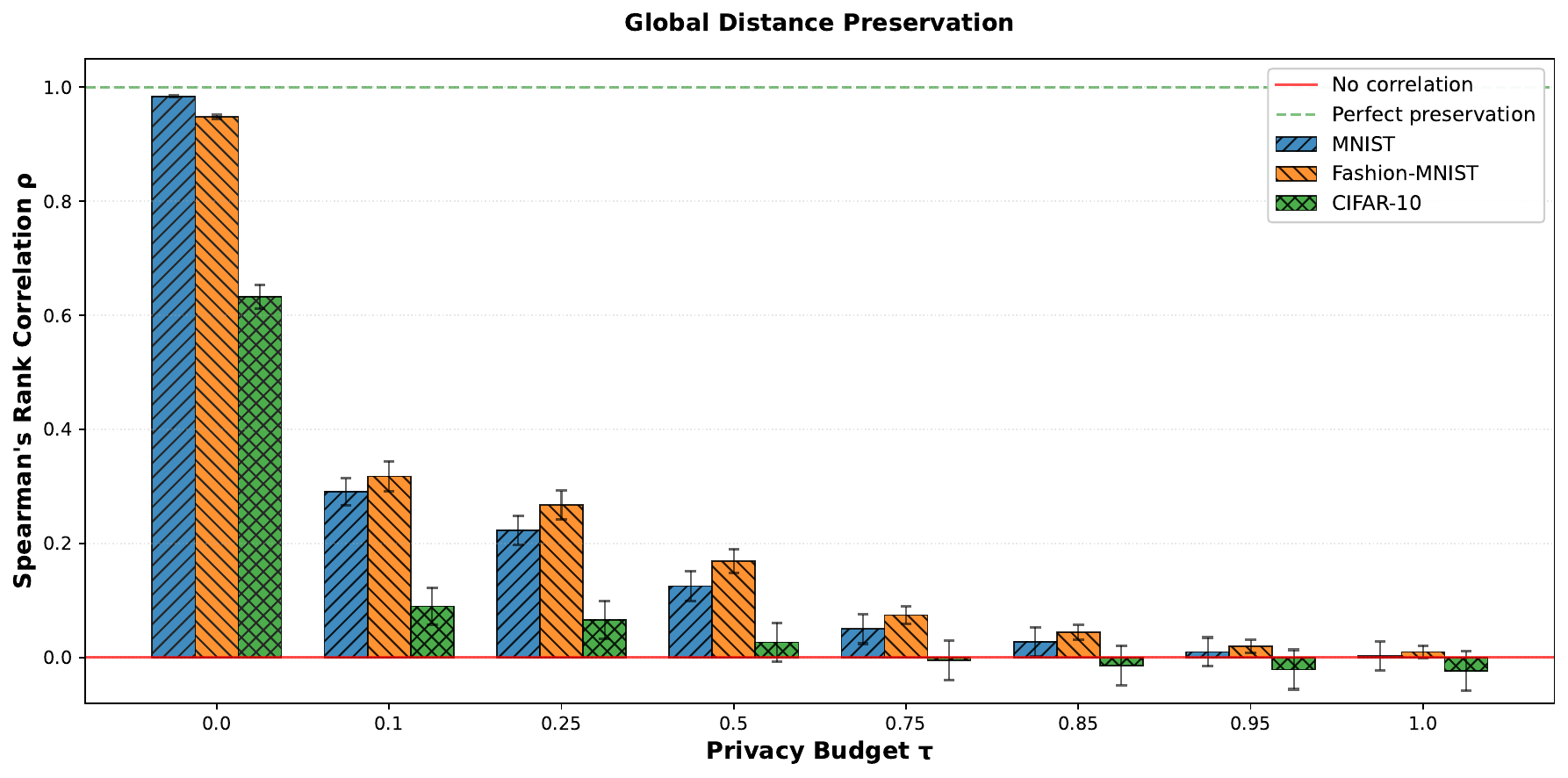}
    \caption{Global distance preservation measured using Spearman's rank correlation $\rho$. Increasing $\tau$ leads to rapid degradation of global geometry, with correlations approaching zero under strong privacy.}
    \label{fig:global_distance}
\end{figure}

\begin{figure}[ht]
    \centering
    \includegraphics[width=\linewidth]{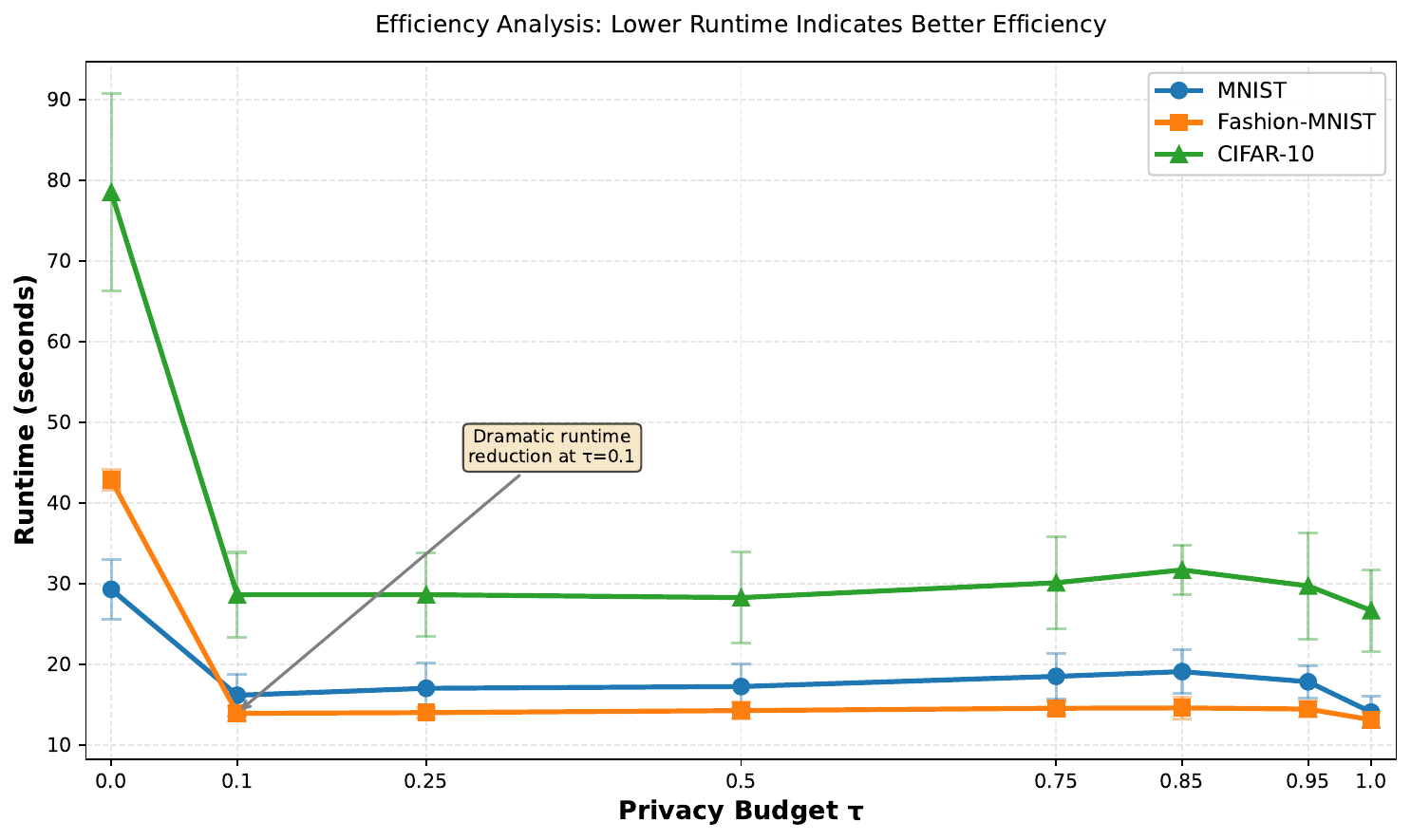}
    \caption{Computational overhead across privacy levels. Runtime decreases significantly as $\tau$ increases, indicating improved efficiency under stronger privacy settings.}
    \label{fig:runtime}
\end{figure}
\noindent Figures~\ref{fig:global_distance} and~\ref{fig:runtime} analyze global structure preservation and computational efficiency. Global distance preservation, measured using Spearman correlation, decreases sharply as $\tau$ increases, indicating significant disruption of global geometry and reduced availability of exploitable structural information. While MNIST and Fashion-MNIST maintain high correlation at $\tau=0$, all datasets approach near-zero correlation under strong privacy, which limits the effectiveness of structure-based inference attacks, with CIFAR-10 exhibiting the most severe degradation. In contrast, computational overhead decreases with increasing $\tau$. Runtime drops significantly from low to moderate privacy levels and remains stable thereafter, suggesting that stronger privacy reduces structural complexity of the transformed data and improves computational efficiency. These results demonstrate that the proposed TADP-RME {\textit{not only enhances resistance to adversarial inference by disrupting both local and global structures, but also improves computational efficiency at higher privacy levels.}} These observations indicate that structural disruption reduces adversarial success probability, thereby improving system reliability while simultaneously lowering computational overhead.

\subsubsection{Pareto Analysis}
\begin{figure}[ht]
    \centering
    \includegraphics[width=\linewidth]{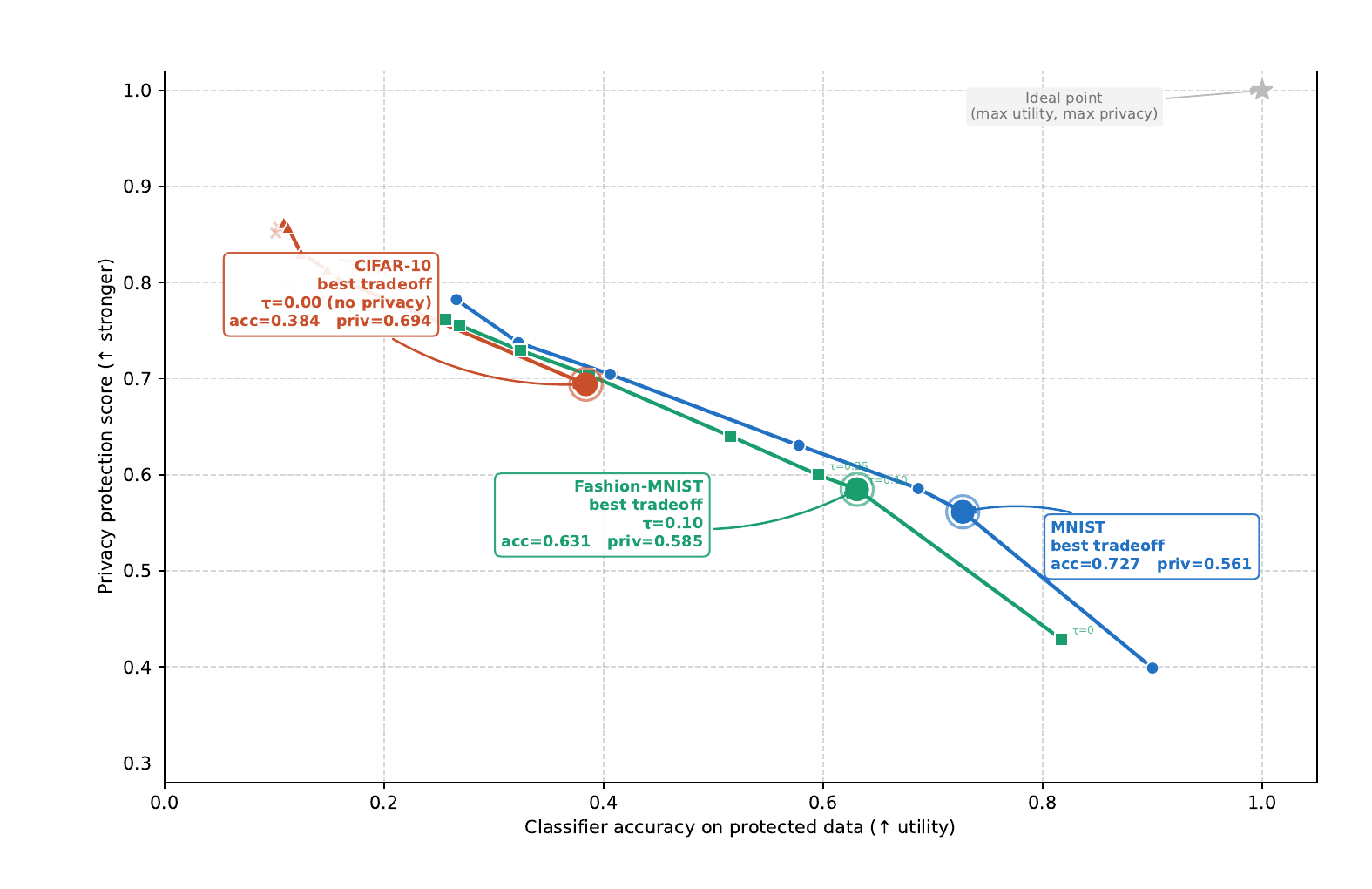}
    \caption{Privacy-utility Pareto frontier showing the trade-off between classification accuracy and empirical privacy, which can be interpreted as system reliability under adversarial conditions. Each point corresponds to a different trust level $\tau$.}
    \label{fig:pareto}
\end{figure}
\noindent Figure~\ref{fig:pareto} illustrates the privacy--utility Pareto frontier achieved by TADP-RME, reflecting the trade-off between utility and reliability under adversarial conditions across different trust levels $\tau$. For MNIST and Fashion-MNIST, an optimal trade-off is observed at moderate privacy levels ($\tau \approx 0.1$–$0.25$), where both accuracy and resistance to adversarial inference remain relatively high. In contrast, CIFAR-10 shows a more constrained frontier, where improvements in privacy (i.e., reduced adversarial success probability) lead to rapid degradation in accuracy. These results highlight the flexibility of the proposed framework in selecting operating points based on application requirements and desired reliability levels while also emphasizing the increased sensitivity of complex datasets to privacy constraints. {\textit{These results indicate that different trust levels correspond to distinct operating points on the reliability–utility frontier, enabling controlled adjustment of adversarial risk.}}

\section{Conclusion}
This work introduced {TADP-RME}, a trust-adaptive differential privacy framework that overcomes the limitations of fixed-budget noise mechanisms.
We integrate a continuous trust-based privacy budget to enable flexible, interpretable trade-offs between utility and privacy across diverse operating conditions. Unlike conventional DP methods that rely solely on stochastic perturbation, the proposed framework addresses structural leakage through Reverse Manifold Embedding. This further enhances resistance to membership, attribute, and reconstruction attacks, thereby reducing adversarial failure probability. Theoretical analysis affirms the capability of the approach to preserve $(\varepsilon,\delta)$-differential privacy guarantees via post-processing, while introducing additional robustness through structural distortion. Empirical evaluation shows that TADP-RME strikes a balance between utility and reliability across multiple datasets. 

Future work will explore extending the approach to deep neural architectures, learning data-driven transformations, and applying the framework in dynamic real-world environments with evolving trust requirements. This work establishes a direct connection between privacy preservation and system reliability by interpreting adversarial inference as a failure event and demonstrating that structural and stochastic mechanisms can jointly reduce failure probability.

\bibliographystyle{unsrt}  
\bibliography{ref}  

@inproceedings{dwork2006,
author = {Dwork, Cynthia and McSherry, Frank and Nissim, Kobbi and Smith, Adam},
title = {Calibrating noise to sensitivity in private data analysis},
year = {2006},
booktitle = {Proceedings of the Third Conference on Theory of Cryptography},
pages = {265–284},
numpages = {20},
}

@article{dwork2014book,
author = {Dwork, Cynthia and Roth, Aaron},
title = {The Algorithmic Foundations of Differential Privacy},
year = {2014},
issue_date = {Aug 2014},
publisher = {Now Publishers Inc.},
address = {Hanover, MA, USA},
volume = {9},
number = {3–4},
journal = {Found. Trends Theor. Comput. Sci.},
pages = {211–407},
numpages = {197}
}

@ARTICLE{data-driven1,
  author={Yang, Yan and Yu, Juan and Yang, Zhifang and Wang, Guoyin and Yu, Hong and Cheng, Qi},
  journal={IEEE Systems Journal}, 
  title={A Trustable Data-Driven Framework for Composite System Reliability Evaluation}, 
  year={2022},
  volume={16},
  number={4},
  pages={6697-6707},
}

@ARTICLE{privacy-rel1,
  author={Ghiasi, Mohammad and Fotuhi-Firuzabad, Mahmud},
  journal={IEEE Transactions on Reliability}, 
  title={Resilience Enhancement of Smart Power Systems Against False Data Injection Attacks Using Adaptive Intrusion Detection Mechanisms}, 
  year={2025},
  volume={},
  number={},
  pages={1-11},
  }

@article{privacy-rel2,
author = {Qiu, Yuan and Yi, Ke},
title = {Approximate DBSCAN under Differential Privacy},
year = {2025},
publisher = {Association for Computing Machinery},
address = {New York, NY, USA},
volume = {3},
number = {3},
journal = {Proc. ACM Manag. Data},
numpages = {24},

}

@ARTICLE{data-driven2,
  author={Shieh, Shiuhpyng Winston and Voas, Jeff and Laplante, Phil and Rupe, Jason and Hansen, Christian and Wu, Yu-Sung and Chen, Yi-Ting and Li, Chi-Yu and Wu, Kai-Chiang},
  journal={IEEE Transactions on Reliability}, 
  title={Reliability Engineering in a Time of Rapidly Converging Technologies}, 
  year={2024},
  volume={73},
  number={1},
  pages={73-82},
  }

@INPROCEEDINGS{jorgensen2015pdp,
  author={Jorgensen, Zach and Yu, Ting and Cormode, Graham},
  booktitle={2015 IEEE 31st International Conference on Data Engineering}, 
  title={Conservative or liberal? Personalized differential privacy}, 
  year={2015},
  volume={},
  number={},
  pages={1023-1034},
}

@inproceedings{ebadi2016pdp,
author = {Ebadi, Hamid and Sands, David and Schneider, Gerardo},
title = {Differential Privacy: Now it's Getting Personal},
year = {2015},
publisher = {Association for Computing Machinery},
booktitle = {Proceedings of the 42nd Annual ACM SIGPLAN-SIGACT Symposium on Principles of Programming Languages},
pages = {69–81},
numpages = {13},

}

@book{aggarwal2008privacy,
  title={Privacy-Preserving Data Mining: Models and Algorithms},
  author={Aggarwal, Charu C. and Yu, Philip S.},
  publisher={Springer},
  year={2008}
}

@inproceedings{liu2019random,
  title={Privacy-Preserving Data Publishing via Random Projection},
  author={Liu, Kun and Kargupta, Hillol},
  booktitle={Proceedings of the SIAM International Conference on Data Mining (SDM)},
  year={2019}
}

@inproceedings{shokri2017membership,
  title={Membership Inference Attacks Against Machine Learning Models},
  author={Shokri, Reza and Stronati, Marco and Song, Congzheng and Shmatikov, Vitaly},
  booktitle={IEEE Symposium on Security and Privacy (S\&P)},
  pages={3--18},
  year={2017}
}

@inproceedings{carlini2021extracting,
  title={Extracting Training Data from Large Language Models},
  author={Carlini, Nicholas and Tramer, Florian and Wallace, Eric and Jagielski, Matthew and Herbert-Voss, Ariel and Lee, Katherine and Roberts, Adam and Brown, Tom and Song, Dawn and Erlingsson, {\'U}lfar and others},
  booktitle={USENIX Security Symposium},
  pages={2633--2650},
  year={2021}
}

@inproceedings{jayaraman2020evaluating,
  title={Evaluating Differential Privacy in Machine Learning},
  author={Jayaraman, Bharath and Evans, David},
  booktitle={USENIX Security Symposium},
  pages={1895--1912},
  year={2020}
}

@article{cummings2023dp,
  title={Advancing Differential Privacy: Where We Are Now and Future Directions for Real-World Deployment},
  author={Cummings, Rachel and Desfontaines, Damien and Evans, David and Geambasu, Roxana and others},
  journal={arXiv preprint arXiv:2304.06929},
  year={2023}
}

@article{dong2022gdp,
  title={Gaussian Differential Privacy},
  author={Dong, Jinshuo and Roth, Aaron and Su, Weijie},
  journal={Journal of the Royal Statistical Society: Series B (JRSSB)},
  volume={84},
  number={1},
  pages={3--37},
  year={2022}
}

@inproceedings{abadi2016dpsgd,
  title={Deep Learning with Differential Privacy},
  author={Abadi, Martin and Chu, Andy and Goodfellow, Ian and McMahan, Brendan and Mironov, Ilya and Talwar, Kunal and Zhang, Li},
  booktitle={ACM Conference on Computer and Communications Security (CCS)},
  pages={308--318},
  year={2016}
}

@inproceedings{indyk1998lsh,
  title={Approximate Nearest Neighbors: Towards Removing the Curse of Dimensionality},
  author={Indyk, Piotr and Motwani, Rajeev},
  booktitle={ACM Symposium on Theory of Computing (STOC)},
  pages={604--613},
  year={1998}
}

@article{bingham2001random,
  title={Random Projection in Dimensionality Reduction: Applications to Image and Text Data},
  author={Bingham, Ella and Mannila, Heikki},
  journal={Proceedings of the Seventh ACM SIGKDD International Conference on Knowledge Discovery and Data Mining},
  pages={245--250},
  year={2001}
}

@inproceedings{fredrikson2015model,
author = {Fredrikson, Matt and Jha, Somesh and Ristenpart, Thomas},
title = {Model Inversion Attacks that Exploit Confidence Information and Basic Countermeasures},
year = {2015},
isbn = {9781450338325},
publisher = {Association for Computing Machinery},
booktitle = {Proceedings of the 22nd ACM SIGSAC Conference on Computer and Communications Security},
pages = {1322–1333},
numpages = {12},
}

@inproceedings{jayaraman2019evaluating,
author = {Jayaraman, Bargav and Evans, David},
title = {Evaluating differentially private machine learning in practice},
year = {2019},
isbn = {9781939133069},
publisher = {USENIX Association},
booktitle = {Proceedings of the 28th USENIX Conference on Security Symposium},
pages = {1895–1912},
numpages = {18},
}

@inproceedings{nasr2019comprehensive,
  title={Comprehensive Privacy Analysis of Deep Learning: Passive and Active White-box Inference Attacks against Centralized and Federated Learning},
  author={Nasr, Milad and Shokri, Reza and Houmansadr, Amir},
  booktitle={2019 IEEE Symposium on Security and Privacy (SP)},
  pages={739--753},
  year={2019}
}

@book{cover2006elements,
  title={Elements of Information Theory},
  author={Cover, Thomas M. and Thomas, Joy A.},
  year={2006},
  publisher={Wiley}
}

@article{spearman1904,
  title={The proof and measurement of association between two things},
  author={Spearman, Charles},
  journal={The American Journal of Psychology},
  year={1904}
}


\end{document}